 \theoremstyle{plain}
\newtheorem{theorem}{Theorem}[section]
\newtheorem{lemma}[theorem]{Lemma}
\newtheorem{corollary}[theorem]{Corollary}
\theoremstyle{definition}
\newtheorem{definition}[theorem]{Definition}
\newtheorem{remark}[theorem]{Remark}
\newcommand{\pr}{{\mathbb P}}
\newcommand{\vc}[1]{{\mathbf #1}}
\newcommand{\suc}{{\rm success}}
\newcommand{\algdd}{{\tt DD\;}}
\newcommand{\algscomp}{{\tt SCOMP\;}}
\newcommand{\algcbp}{{\tt CBP\;}}
\newcommand{\algcomp}{{\tt COMP\;}}
\newcommand{\algsss}{{\tt SSS\;}}
\newcommand{\algCOMP}{\texttt{COMP}}
\newcommand{\algDD}{\texttt{DD}}
\newcommand{\algSCOMP}{\texttt{SCOMP}}
\newcommand{\algSSS}{\texttt{SSS}}
\newcommand{\beff}{\beta_{{\rm eff}}}
\newcommand{\SSS}{\text{\algSSS}}
\newcommand{\COMP}{\text{\algCOMP}}
\newcommand{\mDD}{\text{\algDD}}
\newcommand{\pd}{{\mathcal{PD}}}
\newcommand{\CC}{{\mathcal{C}}}
\newcommand{\DD}{{\mathcal{DD}}}
\newcommand{\subm}{{\mat S}}
\newcommand{\bino}{{\rm Bin}}
\newcommand{\binomi}[3]{b(#3; #1, #2)}
\newcommand{\ZZ}{{\mathbb Z}}
\newcommand{\E}{{\mathbb E}}
\newcommand{\df}[2]{\displaystyle{\frac{#1}{#2}}}
\newcommand{\veps}{\varepsilon}
\newcommand{\N}{{\mathcal{N}}}
\newcommand{\K}{\mathcal{K}}
\renewcommand{\L}{\mathcal{L}}
\newcommand{\Khat}{\widehat{\mathcal{K}}}
\newcommand{\Khsub}[1]{\widehat{\mathcal{K}}_{#1\!\!}}
\newcommand{\Nset}{\{1,\dots,N\}}
\newcommand{\zero}{\mathtt{0}}
\newcommand{\one}{\mathtt{1}}
\newcommand{\zo}{\{\mathtt{0},\mathtt{1}\}}
\newcommand{\compl}{\complement}
\newcommand{\DND}{\mathcal{ND}}
\newcommand{\defn}[1]{\emph{#1}}
\newcommand{\mat}[1]{\mathsf{#1}}
\renewcommand{\vec}[1]{\mathbf{#1}}
\newcommand{\ol}[1]{\overline{#1}}
\newcommand{\olc}[1]{\ol{R}^*_{#1}}
\newcommand{\etal}{{\it et al. }}
\begin{document}

\title{Group testing algorithms: bounds and simulations}
\author{Matthew Aldridge\thanks{Heilbronn Institute for Mathematical Research, School of Mathematics, University of Bristol, University Walk, Bristol, BS8 1TW, UK.  Email: {\tt M.Aldridge@bristol.ac.uk}} \and Leonardo Baldassini\thanks{School of Mathematics, University of Bristol, University Walk, Bristol,  BS8 1TW, UK. Email: {\tt Leonardo.Baldassini@bristol.ac.uk}} \and Oliver Johnson\thanks{School of Mathematics, University of Bristol, University Walk, Bristol,  BS8 1TW, UK. Email: {\tt O.Johnson@bristol.ac.uk} }}
\date{\today}
\maketitle

\begin{abstract}
We consider the problem of non-adaptive noiseless group testing of $N$ items of which
$K$ are defective.
We describe four detection algorithms: the \algCOMP\ algorithm of Chan \emph{et al.}; two new algorithms,
\algDD\ and \algSCOMP, which require stronger evidence to declare an item defective; and an essentially
optimal but computationally difficult algorithm called \algSSS.
By considering the asymptotic \emph{rate} of these algorithms with Bernoulli designs we see that \algDD\ outperforms
\algCOMP, that \algDD\ is essentially optimal in regimes where $K \geq \sqrt N$, and that no algorithm
with a nonadaptive Bernoulli design can perform as well as the best non-random adaptive designs when
$K > N^{0.35}$. In simulations, we see that \algDD\ and \algSCOMP\ far outperform \algCOMP, with
\algSCOMP\ very close to the optimal \algSSS, especially in cases with larger $K$.
%
%
\end{abstract}

\tableofcontents

\section{Introduction}

\subsection{General introduction}

Group testing is a combinatorial optimisation problem that was  introduced 
by Dorfman  \cite{dorfman43}  and has since given rise to the development of numerous  algorithms for its solution. This has included recent interest in
non-combinatorial, probabilistic methods  to tackle the problem.
Recently, the development of compressed sensing (see \cite{candes2006} for an introduction), has made group testing an object of renewed interest, since the two problems
can be viewed within a common framework of sparse inference (see \cite{malyutov,aksoylar}).
 The increasing awareness that other algorithmic problems may be reduced to group testing, and possibly be solved efficiently, encourages further study of the mathematical properties of this problem, increasing the understanding we have of it and creating analogies to other better understood problems.

 The group testing problem is traditionally exemplified by the application that first motivated it \cite{dorfman43}. Suppose a few soldiers within an army suffer from a certain infectious disease.  One could test every single individual, giving a time-consuming and possibly costly procedure. To reduce the cost, we divide the soldiers into a collection of subsets, pool the blood samples drawn from all soldiers in each subset and then test the pooled blood. Assuming the testing procedure is not subject to errors, obtaining a negative test implies that all soldiers in the relevant pool are healthy, whereas a positive test indicates that at least one soldier in the pool is infected. 
We wish to  minimise the number of tests required subject to the success probability of our procedure being high.

More formally, we consider a set $\N = \Nset$ of $N$ \defn{items}, of which a subset $\K
\subset \N$ are \defn{defective}. We will write $K = |\K|$ for
the number of defectives. Note, however, that none of our detection algorithms 
require knowledge of $K$, or even bounds on $K$, in order to estimate the defective set. However the derivation of 
our bounds on rate and success probability of these algorithms will depend on $K$. We will
assume throughout, though, that defectivity is rare, in that $K \ll
N$. Of course, if defectivity is not rare, a strategy of testing each
item individually will be both effective and extremely simple.

To perform nonadaptive group testing, an experimenter needs to decide
on two things. First, in what we shall call the \defn{design stage}
they must design testing pools, by deciding which items will be
included in which tests. Second, in what we shall call the
\defn{detection stage}, they must use the results of the pooled tests
to detect which items were defective. Nonadaptive algorithms differ from 
adaptive algorithms in that the latter alternate design and detection steps,
exploiting the information gathered after each test to design future ones. In
nonadaptive group testing, on the other hand, all the tests are designed
a priori and then carried out concurrently.

Much  work on the design stage of nonadaptive group testing has concentrated on
carefully constructing test designs with certain properties (known as
\defn{disjunctness} and \defn{separability}, see Definition \ref{def:disjunct}) that will with certainty
detect the defective set in $T$ tests as long as the number of
defective items is no more than $K$, for some predecided $K$ and $T$.
Given such a test design, the detection stage is usually simple
\cite[Chapter 7]{du} \cite{dyachkov2}. 

However, such designs can be unsuitable for practical situations. For
example, it assumes that the experimenter either knows $K$ or has an
upper bound on the number of defectives before the experiment begins.
Also, if the experimenter is unable to carry out all $T$ tests, there
will be no guarantees on the performance of the procedure; and
conversely, if the experimenter is able to perform some extra tests,
the procedure is unlikely to be able to take advantage of them. 
Further (see for example \cite[Chapter 7]{du}, \cite{dyachkov2})
these designs give performance that does not meet information theoretic
bounds such as Theorem \ref{thm:basic} below.

This has led to interest in simpler designs, such as the Bernoulli($p$)
random design, where each item is in each test independently at random
with some probability $p$. Work that uses these designs includes
\cite{chan2011}, \cite{malioutov2012}, \cite{atia},
\cite{johnsonc8}, \cite{aldridge}.
 This random design
does not  require the experimenter to understand and accurately implement
tricky combinatorial designs, as it does not necessarily require accurate
knowledge of the number of defectives, or how many tests will be
performed. Furthermore, recent work by Atia and Saligrama
\cite{atia} has shown that the Bernoulli$(1/K)$ design is
asymptotically close to optimal when $K = o(N)$.

\subsection{Paper outline}

In this paper we study four detection algorithms for group testing, which
we explain fully in Section \ref{sec:algorithms}:
\begin{description}
  \item[Combinatorial optimal matching pursuit] (\algCOMP), a simple algorithm due to Chan et al \cite{chan2011,chan2}.
  \item[Definite defectives algorithm] (\algDD), a new algorithm, which is similar
    to \algCOMP, but requires stronger evidence to declare an item as defective.
  \item[Sequential \algCOMP] (\algSCOMP), a new algorithm that starts with \algDD, but
    marks extra items as defective in a sequential manner, ensuring the result is
    a \emph{satisfying set} (see Defintion \ref{def:ss}).
  \item[Smallest satisfying set] (\algSSS), a `best possible' algorithm, albeit
    one that is unlikely to be computationally feasible for large problems (although we
    do discuss how using \algDD\ as a preprocessing step may make it plausible in
    regimes where \algDD\ performs reasonably well).
\end{description}

Although we believe these algorithms should work well for a variety of test designs, we are particularly intereseted in their performance with the popular Bernoulli random design.

In Section \ref{sec:rates}, we analyse the algorithms by deriving bounds on their \emph{maximal achievable rate} (Definition \ref{def:cap})
in different sparsity regimes. First, we see that our new \algdd algorithm achieves higher rates than the \algCOMP\ 
algorithm in all sparsity regimes (except in most sparse regime where $K$ is fixed,
when they perform equally). Second, we see that \algDD\ performs as well as \algSSS\ in the more dense
regimes where $K\geq \sqrt{N}$, and hence that its performance is asymptotically essentially optimal in those cases.
Third, we note that, in denser cases where $K > N^{0.35}$, even the \algSSS\ algorithm falls short of what is achievable
with nonrandom adaptive testing -- suggesting either that Bernoulli test designs are suboptimal for nonadaptive testing in that regime, or
that there exists an `adaptivity gap' between what is possible by adaptive and nonadaptive testing. We summarise these results graphically in
Figure \ref{fig:capsketch}.

In Section \ref{sec:simulation} we perform simulations on the algorithms. We see (in Fig.~\ref{graph1}, for example) that \algDD\ far outperforms the
\algCOMP\ algorithm, and the \algSCOMP\ performs better still -- very close to the impractical but optimal
\algSSS\ algorithm.

\subsection{Previous work}
We now give an overview of some previous work on noiseless non-adaptive group testing. As mentioned above, we have been observing an increasing curiosity about the structural (and not simply algorithmic) properties of group testing. In fact, this dates back to the work of Malyutov and co-authors in the 1970s (see \cite{malyutov} for  a
review of their contribution), who
 established an analogy between noisy group testing and Shannon's channel coding theorem
\cite{shannon2}. The idea is to treat the recovery of the defective set as a decoding procedure for a message transmitted over a noisy channel, where the testing matrix represents the codebook used to translate the message. 
Using such ideas, more recent work of
Atia and Saligrama \cite{atia}  mimics the channel coding theorem's results and obtains an upper bound of $O(K\log N)$ on the number of tests required. 
Such an upper bound refers to the amount of tests needed for arbitrarily small average error probability, and should in fact be loosened depending on the kind of error produced by noise, e.g. false positives or negatives. Still following the information-theoretic path, Atia and Saligrama
\cite{atia} also prove a lower bound on the number of tests using Fano's inequality; unlike in the case of channel coding, the upper and lower bound seem not to meet asymptotically. Moreover, the authors also show that the same upper bound $T =  O(K\log N)$ holds even for noiseless group testing. Similar results 
had already been derived in the past, 
see for example the work of Malyutov \cite{malyutov3, malyutov4} in a very general setting.

Wadayama \cite{wadayama} describes an approach to the design phase of the group testing
problem motivated by LDPC codes. In particular, he chooses test matrices with 
constant row and column weights, and proves theoretical results which (in the regime where
$K = O(N)$) bound the optimal code size from above and below. In many cases (see \cite[Figure I]{wadayama} for details)
the resulting lower and upper bounds are very close; often within 5\% of each other, or even less.
 Notice that Wadayama's results should be compared with the densest
problems we consider ($\beta \sim 0$), where we show (see Figure \ref{fig:capsketch} for a summary)  that the \algdd algorithm performs close to its theoretical
optimum.
However, in \cite{wadayama} Wadayama does not discuss the question
of how decoding can be practically achieved.

In terms of decoding algorithms, the similarity between compressive sensing and group testing
(as discussed in \cite{malyutov,aksoylar}) has been used in \cite{chan2011,chan2} by Chan \etal to present testing algorithms for both noiseless and noisy non-adaptive group testing. In particular, the authors introduce the Combinatorial Basis Pursuit ({\algcbp}\!\!) and Combinatorial Orthogonal Matching Pursuit ({\algcomp}\!\!) algorithms, and their
 noisy versions ({\tt NCBP} and {\tt NCOMP}), prove universal lower bounds for the number of tests needed  to get a certain success probability and upper bounds for the algorithms they are introducing. The \algcomp 
algorithm allows  the strongest bounds in their paper to be rigorously proved, and will be the basis of our work.

Other approaches to classical instances of group testing have been proposed in the literature. In particular, its natural integer-programming (IP) formulation has been addressed by Malioutov and Malyutov \cite{malioutov2012}, Malyutov and Sadaka \cite{malyutov2} and Chan \etal \cite{chan2}: noticing that group testing allows an immediate IP formulation, it is possible to relax the integer program and solve the associated linear version
(see Section \ref{sec:sss}).
These authors then consider decoding algorithms that find integer solutions `near' (in some sense) to the relaxed solution.

\section{Definitions and notations}

We now formally define the main concepts and terminology we shall use in this paper

\begin{definition} \label{def:designs}
A \defn{test design} of $T$ \defn{tests} can be summarised by a
\defn{testing matrix} $\mat X = (x_{it} : i \in \N, t = 1,\dots, T)$,
where $x_{it} = \one$ indicates that item $i$ is included in test $t$
and $x_{it} = \zero$ indicates that item $i$ is not included in test
$t$.

A \defn{Bernoulli$(p)$ test design} is defined by the random testing
matrix $\mat X$ whose $(i, t)$th element $X_{it}$ is $\one$ with
probability $p$ and $\zero$ with probability $1-p$, independent over
$i$ and $t$.\end{definition}

As previously mentioned, past work on group testing focussed on constructing
test designs with the favourable structural properties of {\em disjunctness}
and {\em separability}. These properties are in practice very restrictive, and are defined as follows.

\begin{definition}\label{def:disjunct} Consider a testing matrix $\mat X\in\{\zero,\one\}^{T\times N}$, and recall we write $\N$ for the set of all items:
\begin{enumerate} \item
 $\mat X$
	 is called {\em $K$-disjunct} if, 
	for all subsets $\L\subset \N$ of cardinality $|\L|\leq K$:
	\begin{equation}  \text{for all  $i\in \L^\complement$ there is a test $t$ such that (a) 
	$x_{it}=\one$, and (b) $\mat x_{jt}=0$ for all $j\in \L$.} \label{eq:disjunct} \end{equation}
In particular, taking $\L = \K$, the true defective set, we see that $K$-disjunctness implies that
every non-defective item appears 
in at least one negative test. 
\item
	$\mat X$ is said to be {\em $K$-separable} if, denoting by $\vec x_i$ 
	the $i$-th column of $\mat X$, 
	for all pairs of distinct subsets $\mathcal I, \mathcal J \subset \N$ of cardinality 
	$|\mathcal I|, |\mathcal J|\leq K$, we have
	\[ \bigvee_{i\in \mathcal I} \vec x_i \neq \bigvee_{i\in \mathcal J}\vec x_i\ , \]
	where $\bigvee$ denotes the componentwise boolean sum of binary vectors (an \texttt{OR} operation).
\end{enumerate}
\end{definition}

The detection stage of an algorithm will be based on the outcomes of the tests.
The \defn{outcome} of a test will be positive if there is at least one
defective item in the test, and negative if there are no defectives in
the test. Formally:

\begin{definition}
If we write $y_t = \one$ for the outcome of the
$t$th test being positive and $y_t = \zero$ for it being negative, we
have
  \begin{equation}
 y_t = \begin{cases}
             \one & \text{ if\;\; $|\{ i \in \K : x_{it} = \one \}| \geq 1$,} \\
             \zero & \text{ if\;\; $|\{ i \in \K : x_{it} = \one \}| = 0$.}
           \end{cases} \end{equation}
It will be convenient to write $\vec y = (y_t) \in \{\zero, \one\}^T$
for the vector of all the outcomes.
\end{definition}

In other words, using the notation above, we have $\vec y = \bigvee_{i \in \K} \vec x_i$.

\begin{definition}
A \defn{detection algorithm} is a method to estimate the defective set
from the test outcomes; that is, a function $\Khat \colon \{\zero,
\one\}^T \mapsto {\mathcal P}(\N)$ (where we write ${\mathcal P}(\N)$ for the power set of $\N$), that associates to each outcome
vector $\vec y$ a subset $\Khat \subset \N$ of the items.
\end{definition}

It will be useful to write
  \[ \binom{\mathcal A}{n} := \{ \mathcal B \subset \mathcal A : |\mathcal B| = n \} \subset {\mathcal P}(\mathcal A) \]
to denote the subsets of a set $\mathcal A$ of size $n$.

\begin{definition}
The \defn{average error probability} is defined by
\begin{equation}
\epsilon := \pr_{\mat X, \K} (\Khat \neq \K)
               = \frac{1}{\binom NK} \sum_{\K \in \binom{\N}{K}} \pr_{\mat X} (\Khat \neq \K).
\end{equation}
Here, the probability is over the random defective
set $\K$ and, if a random test design is used, the random choice of
$\mat X$.  If $\mat X$ is
deterministic, then the summand is just an indicator function.

We write $\pr(\suc)  = 1 -\epsilon$ for the success probability.
\end{definition}


An important notion will be that of a \defn{satisfying set}.

\begin{definition}\label{def:ss}
Given a test design $\mat X$ and outcomes $\vec y$, we shall call a
set of items $\L \subset \N$ a \defn{satisfying set} if group testing with
defective set $\L$ and test design $\mat X$ would lead to the outcomes
$\vec y$.

Clearly the defective set $\K$ itself is a satisfying set.
\end{definition}


The effectiveness of group testing algorithms often depends on the \emph{sparsity}
of the problem; that is, how common it is for items to be defective.
In this paper, for benchmarking purposes, we consider a range of sparsity regimes, parameterised by a
\emph{sparsity parameter} $\beta$. Specifically, we consider
$K = N^{1 - \beta}$ for $0 < \beta \leq 1$. So large $\beta$ corresponds
to the most sparse cases, while small $\beta$ corresponds to the less sparse
(or denser) cases.
This sparsity parametrization was considered in different contexts by
Donoho and Jin \cite{donoho} and by Haupt, Castro and Nowak \cite{haupt}.


We will summarize the performance of our detection algorithms by considering their
\emph{maximum achievable rate} with Bernoulli tests and the full range of sparsity
regimes $\beta \in (0,1]$. Here, following \cite{johnsonc10}, the rate can be thought
of as the number of bits per test learned by the group testing algorithm.

\begin{definition} \label{def:cap}
Consider group testing with $N$ items of which $K$ are defective. An algorithm that
uses $T$ tests is said to have \emph{rate} $\log_2 \binom NK / T$.

A rate $R$ is said to be \emph{achievable} by an algorithm \texttt{A} in sparsity regime
$\beta$ if, for any $\delta > 0$, there is some group testing procedure with $N$~items,
$K = N^{1-\beta}$ defective items, when algorithm~$\texttt{A}$ uses $T$~tests, where the rate satisfies $\log_2 \binom{N}{N^{1-\beta}} / T \geq R$,
and the error probability satisfies $\epsilon \leq \delta$.

We write $R^*_{\mathtt{A}}(\beta)$ for the maximum achievable rate for algorithm \texttt{A}
in sparsity regime $\beta$, and define the \emph{capacity} $C(\beta)$ to be the maximum rate achievable by any group
testing algorithm in sparsity regime $\beta$.
\end{definition}

We note that a similar
concept of rate, defined for fixed $K$ as
$R(K) = (\log_2 N)/T$ was studied by Malyutov and others
\cite{malyutov}. This corresponds only to our sparsest regime $\beta = 1$, while our definition
allows us to make comparisons across a much wider sparsity range.

In this paper, for consistency, we will compare bounds on the success probability $\pr(\suc)$ and rate $R$ for different algorithms. In both cases, large values represent a more
successful algorithm. For example, we will refer to a result as a lower bound if it controls the rate and success probability from below (gives performance guarantees).



A simple counting argument (see for example Theorem \ref{thm:basic} of this paper) shows
that $C(\beta) \leq 1$.  For adaptive testing, in \cite{johnsonc10} it was shown that for $\beta > 0$,
we can indeed achieve the capacity $C(\beta)=1$ using the generalized binary splitting algorithm of Hwang \cite[Section 2.2]{du}. Analogous results 
for slightly different or more general settings are also present in the literature; see for example
\cite{malyutov5} for its particular focus on adaptive algorithms and 
references therein. 

In comparison, we shall see later that the essentially optimal \algSSS\ algorithm falls short of this in some denser regimes, in that
we certainly have $R^*_{\SSS}(\beta) < 1$ for $\beta < 0.65$ (see Theorem  \ref{sssthm} below). This could be because Bernoulli test designs
are suboptimal in these regimes, or it could be that no nonadaptive procedure can achieve rate $1$, meaning there
is an `adaptivity gap' for denser problems.

\section{Algorithms} \label{sec:algorithms}

In this section we explain the algorithms for the detection stage we will analyse in this paper.
The algorithms are intended to work for any test design, though we will usually analyse 
their performance in the context of Bernoulli test designs (see Definition \ref{def:designs}).

\subsection{Definite non-defectives -- \algcomp algorithm} \label{sec:comp}

A simple inference from noiseless group testing is the following:
if an item appears in a negative test, then it cannot be defective. This motivates the following definition:
\begin{definition} \label{def:nd}
We consider the guaranteed \defn{not defective} (ND)
set
\begin{equation} \label{eq:dnd} 
 \DND := \{i : \text{for some $t$ (a) $x_{it} = \one$ and (b) $y_t = \zero$} \}, \end{equation}
 and write $\pd = \DND^\compl$ for the set of \defn{possible defectives} (PD). \end{definition}
Chan \etal \cite{chan2011} suggest an algorithm, which they call combinatorial
orthogonal matching pursuit (\algcomp\!\!), that takes the ND items to be
non-defective but all other items to be defective.\footnote{In
a later paper, Chan \etal refer to the algorithm as
\texttt{CoMa} (column matching) \cite{chan2}. The decoding part of
their \texttt{CBP} (combinatorial basis pursuit) \cite{chan2011} or
\texttt{CoCo} (coupon collector) \cite{chan2} algorithm works the same way,
although is only considered as applied to a slightly different random test design.}
That is, \algcomp takes as an estimate all possibly defectives, or $\Khsub{\algcomp} = \pd$.

Note that $\Khsub{\algcomp}$ is a satisfying set (in the sense of Definition \ref{def:ss}) -- in fact, it is the largest
satisfying set. Thus if the true defective set $\K$ is the unique satisfying
set then the \algcomp algorithm certainly finds it.  Note also that the \algcomp
algorithm can only make false-positive errors (declaring nondefective items to be defective), and never makes false-negative
errors (declaring defective items to be nondefective); in other words, we have $\Khsub{\algcomp} \supseteq \K$.

 Notice, moreover,
that by Definition \ref{def:disjunct}, if the design $\mat X$ is $K$-disjunct 
the \algcomp can successfully recover the defective set.
This is because $K$-disjunctness implies that every non-defective item appears 
in at least one negative test, hence there are no intruding non-defectives. However, notice that $K$-disjunctness is a very restrictive property, since it imposes restrictions on
all sets $S$ of cardinality $\leq K$, whereas \algcomp will succeed if   property (\ref{eq:disjunct}) holds for $S$ being the true defective set $\K$.

\subsection{Definite defectives -- \algdd algorithm} \label{sec:dd}

Once the possible defective (PD) items have been identified, some other elements can be identified
as being \defn{definitely defective} (DD).  
The key idea is that
if a positive test contains exactly one possible defective item,
 then we can in fact be certain  that item is  defective.
This motivates
our \algdd algorithm, which uses the possible defectives $\pd$ found in the \algcomp
algorithm 
as a starting point. 
The \algdd algorithm has three
steps:
\begin{enumerate}
	\item Define the possible defectives $\pd = \DND^{\compl}$, for
the set $\DND$ introduced in  (\ref{eq:dnd}). 
	\item 
For each positive test which contains a single item from $\pd$, declare the corresponding
item to be defective.
	\item All remaining items are declared to be non-defective.
\end{enumerate}
More formally, the \algdd algorithm defines every item in the set
  \begin{equation} \DD := \{i \in \pd: \text{for some $t$, (a) $x_{it} = \one$, (b) $x_{jt} = \zero$
for all $j \in \pd\setminus\{i\}$ and (c) $y_t = \one$}\} \label{eq:defdd} \end{equation}
to be defective, and all other items to be non-defective.
That is, we take $\Khsub{\algdd} = \DD$. Note that 
 $\Khsub{\algdd}$ need not be a satisfying set.

\begin{figure}
\begin{center}
\includegraphics{./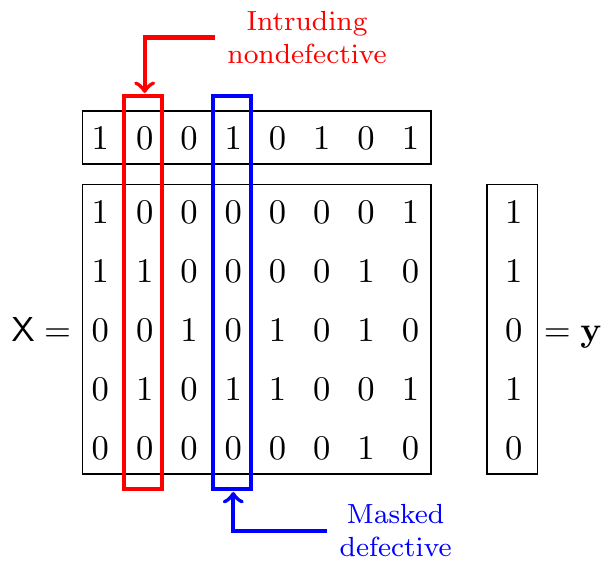}
\caption{An example of a group testing problem, including a masked defective and
an intruding non-defective, in the terminology we introduce here. The masked 
defective never appears in a positive test without some other defective item
also being present. The 
intruding non-defective never appears  in a negative test.}
\label{fig:masked}
\end{center}
\end{figure}

Notice that steps 1 and 2 in the \algdd algorithm make no mistake; indeed, step 1 just isolates all items that are ND, which can then be ignored, thus allowing us to restrict our attention to the items in $\pd$.
The set $\pd$ contains the $K$ true defectives, plus a (random) number $G$ of 
\defn{intruding} non-defectives (see Figure \ref{fig:masked}), meaning we can analyse the $T \times (K+G)$ 
submatrix $\subm$,  corresponding to the items in $\pd$.  
 Step 2, in turn, isolates the definitely defective items of $\pd$, i.e. those defectives that appear with no other item of $\pd$. After step 2 we are then left with
\begin {itemize}
	\item $G$ intruding non-defectives that haven't been discarded in step 1;
	\item defectives that never appear without other $\pd$ items in a test (we call
such an item {\em masked} -- see Figure \ref{fig:masked}).
\end{itemize}
Hence only step 3 can make a mistake, which occurs
when there are masked defectives which are erroneously
declared to be non-defective.
In other words,
the \algdd algorithm can only make false-negative errors, and never makes
false-positive errors, so $\Khsub{\algdd} \subseteq \K$.

The motivation for Step 3 of the DD algorithm comes from the sparsity of the defective set. That is, we cannot be sure whether items in $\pd$ but not $\DD$ are defective or not. However, since defectiveness is assumed to be  rare, in that $K \ll N$, it seems natural to assume that these items are nondefective, in the absence of evidence to the contrary. Conversely, the \texttt{COMP} algorithm assumes that these unknown items are defective, thereby often making false positive errors.

We derive an exact expression for the error probability of the  \algdd algorithm
in Section \ref{sec:ddsuc}.
The main difference between the \algdd algorithm
and the \algcomp algorithm
of Chan \etal \cite{chan2011} is that \algcomp succeeds if and only if $G=0$, whereas \algdd can succeed for positive $G$.

\subsection{The \algscomp algorithm} \label{sec:scomp}

In order to improve on the \algdd algorithm of Section \ref{sec:dd} we introduce
the \algscomp (Sequential \algCOMP) algorithm. The key
observation is that $\Khsub{\algdd}$ need not be a satisfying set, since there may
exist positive tests which contain no elements of $\Khsub{\algdd}$.
\begin{definition} Given an estimate $\Khat$, we say that a positive test is
{\em unexplained} by $\Khat$ if it contains no element from $\Khat$.

Note that a set
$\Khat \subseteq \pd$ of possible defectives being a satisfying set is equivalent to there being no unexplained positive
tests.
\end{definition}
Since each unexplained test must contain at least one of the masked defectives
in $\K \setminus \Khsub{\algdd}$, we might consider items in $\pd$ that appear in
many unexplained tests as most likely to be defective. The \algscomp algorithm
uses this principle to sequentially and greedily extend $\Khsub{\algdd}$ to a satisfying
set, by seeking items which explain the most currently unexplained tests.
 This is an attempt to exploit all the information available 
at each step, which is updated every time an item in $\pd$ is added to $\Khat$. 

The algorithm proceeds as follows:
\begin{enumerate}
\item Carry out
the first two steps of the \algdd algorithm; that is, generate an initial estimate
$\Khat = \Khsub{\algdd} = \DD$, for $\DD$ as defined in  (\ref{eq:defdd}).
\item \label{item:loop} Given an estimate $\Khat$:
\begin{enumerate}
\item If $\Khat$ is satisfying, terminate the algorithm, and use $\Khat$ as our 
final estimate of $\K$.
\item If $\Khat$ is not satisfying, then find the element $i \in \pd$ which appears
in the largest number of tests which are unexplained by $\Khat$ 
(breaking ties arbitrarily), and create a new estimate 
$\Khsub{\rm new} = \Khat \bigcup \{ i \}$. Repeat Step \ref{item:loop}.
\end{enumerate}
\end{enumerate}
Notice that Step \ref{item:loop} gives an iterative procedure which greedily
extends any estimate $\Khat$ to a satisfying set.
The \algscomp algorithm is hard to analyse theoretically. However in Section \ref{sec:simulation}
we give evidence from simulations that it outperforms the \algdd algorithm which it is based
on, and gives performance very close to optimal.

It is interesting to notice the analogy with Chvatal's approximation algorithm to the {\em set covering problem} (or just `set cover') --
see \cite{vazirani2004} for a discussion. Given a set $U$ and a family of subsets $S\subseteq
{\mathcal P}(U)$,
set cover requires us to find the smallest family of subsets 
 in $S$ whose union contains all elements of $U$. This optimisation problem is NP-hard, as for a putative solution optimality cannot be verified in polynomial time. In 1979, Chvatal \cite{chvatal1979} proposed an approximate solution by choosing, at each stage, the set in $S$ that covers the most uncovered elements. The algorithm produces a solution which can  be at most $H(|U|)$ times larger than the optimal, where $H(n) \sim \ln n$ 
is the $n$-th harmonic number. 
Similarly, \algscomp chooses defective items in a greedy manner to `cover' (or in our terminology, `explain') as many tests as possible, until all tests are explained. So similarly, we are guaranteed to find a satisfying set with no more than $KH(K) \sim K \ln K$ items.

Note that this implies that if the test design $\mat X$ is $KH(K)$-separable (see Definition \ref{def:disjunct}), then the there will be only one satisfying set constaining $KH(K)$ or fewer items. Since \algSCOMP\ finds such a satisfying set, in this situation it is guaranteed to find the correct defective set $\K$. As before, though, we note that \algSCOMP\ can succeed even when $KH(K)$-separability is not achieved.

There are inapproximability results that accompany Chvatal's algorithm, showing that, under standard complexity theory assumptions, no better approximation ratio is possible for set cover
(see for example
\cite[Theorem 29.31]{vazirani2004}).
In the light of this, we might consider \algSCOMP\ to similarly be a `best possible practical approximation' to the
\algSSS\ algorithm below.


\subsection{Smallest satisfying set -- \algsss algorithm} \label{sec:sss}


We now consider what an optimal detection algorithm might look like, without worrying about its computational feasibility.

Facts we know for sure about the true defective set $\K$ are:
\begin{itemize}
 \item $\K$ is a satisfying set, since we are considering noiseless testing,
 \item $K = |\K|$ is likely to be small, since we are considering regimes where $K \ll N$.
\end{itemize}

This suggests an approach where we attempt to find the smallest
set that satisfies the outputs. (This approach is similar to that taken in compressed sensing,
where one typically seeks the sparsest signal $\vec x$ that fits some given measurements $\vec y = \mat A \vec x$.)

That is, if we let $\vec z$ be
a solution to the $\zero$--$\one$ linear program
  \begin{align}
  \begin{split}
    \text{minimize} \qquad & \vec 1^\top \vec z  \\
    \text{subject to} \qquad & \vec x_t \cdot \vec z = 0    \quad \text{for $t$ with $y_t = \zero$}  \\
                             & \vec x_t \cdot \vec z \geq 1 \quad \text{for $t$ with $y_t = \one $}  \\
     & \vec z \in \zo^N ,
  \end{split} \label{ilp}
  \end{align}
then the smallest satisfying set \algsss algorithm uses
$\Khsub{\algsss} = \{i : z_i = \one \}$. (If there is not a
unique solution to \eqref{ilp}, choose one of the solutions arbitrarily.)
We analyse the success probability of the \algsss algorithm in Section \ref{sec:sssprob}.

Note that if the number of defectives $K$ is known, we can add the constraint
$\vec 1^\top \vec z \geq K$ to ensure we find a satisfying set of size exactly $K$. In this situation, the \algSSS\ algorithm finds an arbitrary satisfying set of the correct size, and since we are considering noiseless testing, one can do no better than this, so \algSSS\ is optimal. Hence, for the unknown-$K$ setting we consider in this paper, we will often refer to \algSSS\ as `essentially optimal'.

Furthermore, notice that if the test design $\mat X$ is $K$-separable (see Definition \ref{def:disjunct}), then the defective set is also the 
smallest satisfying set. Indeed, $K$-separability implies that no two sets of columns of $\mat X$ of size at most $K = |\K|$ have the same boolean sum, meaning that no other set as sparse as $\K$ or sparser than $\K$ could lead to the same outcome $\vec y$.

Unfortunately $\zero$--$\one$ linear programming is NP-hard, so the
\algsss algorithm is unlikely to be feasible for large problems.  We
include it here as a `best possible' benchmark against which to
compare other more feasible algorithms.

However, for moderately sized problems, we can use our new
\algDD\ algorithm as a preprocessing  step to reduce the size of the program
\eqref{ilp}.  Specifically, we can set
  \begin{align*}
    \mathcal N^* &:= \mathcal N \setminus (\DND \cup \DD) , \\
    \mathcal T^* &:= \{ t \in \{ 1, \ldots, T \} : x_{it} = \zero \text{ for
all $i \in \DD$, and } y_t = \one \} , \\
        \mat X^* &:= (x_{it} : i \in \mathcal N^*, t \in \mathcal T^*) ;
  \end{align*}
find $\vec z^* = (z_i^* : i \in \mathcal N^*)$ to solve the
smaller problem
  \begin{align}
    \text{minimize} \qquad & \vec 1^\top \vec z^* \notag \\
    \text{subject to} \qquad & \mat X^*\vec z^* \geq \vec 1 \notag \\     & \vec z^* \in \zo^{|\mathcal N^*|} ; \notag
  \end{align}
and choose
  \[ \Khsub{\algsss} = \DD \cup \{i \in \mathcal N^* : z_i^* = \one \} . \]

If the number of `not definitely anything' items $|\mathcal N^*|$ is only
of order $\ln N$, then the complexity of this problem becomes only polynomial in $N$,
and could be regarded as practical.

We also mention that recent work by Malyutov and coauthors
\cite{malioutov2012,malyutov2} has tried to construct the
defective set from
the solution to the relaxed problem
  \begin{align}
    \text{minimize} \qquad & \vec 1^\top \vec z \notag \\
    \text{subject to} \qquad & \vec x_t \cdot \vec z = 0    \quad \text{for $t$ with $y_t = \zero$}  \\
                             & \vec x_t \cdot \vec z \geq 1 \quad \text{for $t$ with $y_t = \one $}  \\
     & \vec z \geq \mathbf{0}  \notag
  \end{align}
where the $z_i$ can be any positive real numbers.
Chan \etal \cite{chan2} consider a similar relaxed linear program
for noisy group testing.

\section{Bounds on rates}\label{sec:rates}

In this section, we give the main results of this paper. Below, we state bounds
on the maximal achievable rates (recall Definition \ref{def:cap}) of our algorithms
with Bernoulli test designs. The bounds are illustrated in Figure \ref{fig:capsketch}.

\begin{figure}
	\centering
		\includegraphics[scale=1.5]{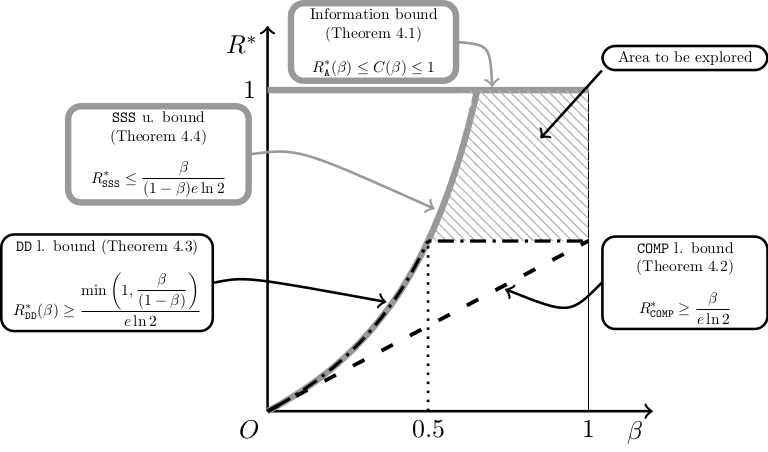}
\caption{Bounds on achievable rates of the algorithms \algCOMP, \algDD, and \algSSS\ for Bernoulli
  test designs, shown with the information bound on capacity.} \label{fig:capsketch}
\end{figure}

First, from a simple counting argument, we have the following capacity bound, which
we refer to as the information bound.

\begin{theorem} 
  For any algorithm $\mathtt A$, we have $R^*_{\mathtt A}(\beta) \leq C(\beta) \leq 1$.
\end{theorem}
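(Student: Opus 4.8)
The plan is to prove the two inequalities separately, with the bulk of the work in the information-theoretic bound $C(\beta) \leq 1$; the inequality $R^*_{\mathtt A}(\beta) \leq C(\beta)$ is immediate from Definition~\ref{def:cap}, since $C(\beta)$ was defined as the supremum of achievable rates over \emph{all} algorithms, hence it dominates the maximum achievable rate of any particular algorithm $\mathtt A$.

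For the bound $C(\beta) \leq 1$, I would argue by a counting (pigeonhole) argument. Fix a sparsity regime $\beta$, and suppose rate $R$ is achievable. By Definition~\ref{def:cap}, for every $\delta>0$ there is a group testing procedure with $N$ items, $K = N^{1-\beta}$ defectives, using $T$ tests with $\log_2\binom{N}{K}/T \geq R$ and average error probability $\epsilon \leq \delta$. The key observation is that the outcome vector $\vec y \in \zo^T$ takes at most $2^T$ distinct values, and a deterministic detection algorithm $\Khat$ is a function of $\vec y$ alone, so it can output at most $2^T$ distinct estimates. Hence at most $2^T$ of the $\binom NK$ possible defective sets can be correctly identified. (For a random test design $\mat X$, condition on $\mat X$: for each fixed realisation the same bound applies, so it survives averaging.) Therefore the success probability satisfies $\pr(\suc) \leq 2^T/\binom NK$, i.e. $\epsilon \geq 1 - 2^T/\binom NK$. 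Combined with $\epsilon \leq \delta$ this forces $2^T \geq (1-\delta)\binom NK$, so $T \geq \log_2\binom NK + \log_2(1-\delta)$, and thus
\[
  R \leq \frac{\log_2\binom NK}{T} \leq \frac{\log_2 \binom NK}{\log_2\binom NK + \log_2(1-\delta)}.
\]
Letting $N \to \infty$ (so $\log_2\binom NK \to \infty$) and then $\delta \to 0$, the right-hand side tends to $1$, giving $R \leq 1$. Taking the supremum over achievable $R$ yields $C(\beta) \leq 1$.

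The one point requiring a little care — and the mildest obstacle — is handling the quantifier order in Definition~\ref{def:cap} correctly: "achievable" allows $T$ to grow with $N$, so one cannot simply drop the $\log_2(1-\delta)$ term, but since $\log_2\binom{N}{N^{1-\beta}} \to \infty$ as $N\to\infty$ for every fixed $\beta \in (0,1]$, the correction is asymptotically negligible and the argument goes through cleanly. A cosmetically simpler alternative is to note $\epsilon \geq 1 - 2^T/\binom NK$ directly implies that if $\epsilon$ can be made smaller than, say, $1/2$, then $2^T \geq \tfrac12\binom NK$, which already suffices after the limit. Either way, no genuine difficulty arises; the whole statement is a standard counting bound and its proof is short.
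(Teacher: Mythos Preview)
Your proof is correct and follows exactly the counting argument the paper invokes (it simply cites Theorem~\ref{thm:basic}, the bound $\pr(\suc)\leq 2^T/\binom{N}{K}$, and the references \cite{chan2011,johnsonc10}); you have supplied the details the paper omits. One small quibble on phrasing: you cannot literally ``let $N\to\infty$'' since $N$ is determined by $\delta$ through the existential in Definition~\ref{def:cap}, but your displayed bound tends to $1$ as $\delta\to 0$ \emph{uniformly} over all $N\geq 2$ (because $\log_2\binom{N}{K}\geq 1$ while $\log_2(1-\delta)\to 0$), so the conclusion $R\leq 1$ follows cleanly.
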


For a formal proof with explicit bounds on error probability, see for example \cite{chan2011} or \cite{johnsonc10}. The paper \cite{johnsonc10} proves  
strong error bounds (see Equation (\ref{eq:infobd})), corresponding to a `strong converse' in information theory.

Second, simple manipulation of Theorem \ref{chanboundthm},
a bound  
on the error probability of \algCOMP\  due to Chan et al \cite{chan2011,chan2},  gives the following rate bound:
\begin{theorem} \label{compthm}
  For the \algCOMP\ algorithm with a Bernoulli$(1/K)$ test design, we have
    \[ R^*_{\COMP} \geq \frac{\beta}{e \ln 2} \approx 0.53 \beta . \]
\end{theorem}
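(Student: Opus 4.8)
The plan is to deduce the rate bound from a bound on the error probability of \algCOMP, using only the elementary characterisation of when \algCOMP\ succeeds together with the definition of achievable rate. So the argument splits naturally into two parts: first establish (or recall, as Theorem \ref{chanboundthm}) an explicit exponential error bound for \algCOMP\ with a Bernoulli design, and then optimise the design parameter and choose the number of tests so as to extract the constant $1/(e\ln 2)$.

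First I would recall from Section \ref{sec:comp} that \algCOMP\ outputs $\pd$, which equals $\K$ exactly when $G = 0$, i.e.\ when every non-defective item lies in $\DND$, appearing in at least one negative test. Thus $\epsilon$ is at most the probability that some non-defective is intruding. Fixing a non-defective $i$ and a Bernoulli$(p)$ design, a given test $t$ has $x_{it}=\one$ with probability $p$ and is negative with probability $(1-p)^K$, and these two events are independent since $i\notin\K$; hence $\pr(x_{it}=\one,\,y_t=\zero) = p(1-p)^K$, so the probability $i$ is intruding is $(1-p(1-p)^K)^T \le \exp(-Tp(1-p)^K)$. A union bound over the at most $N$ non-defectives gives $\epsilon \le N\exp(-Tp(1-p)^K)$, which is essentially Theorem \ref{chanboundthm}.

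Next I would set $p = 1/K$ (the choice of \cite{atia}), so that $p(1-p)^K = \frac1K(1-1/K)^K$, and observe that in any regime $\beta\in(0,1)$ we have $K = N^{1-\beta}\to\infty$, whence $(1-1/K)^K = e^{-1}(1+o(1))$. Combining this with $\log_2\binom NK \ge K\log_2(N/K) = \beta K \log_2 N$, I would fix an arbitrary $R < \beta/(e\ln 2)$ and $\delta > 0$ and run \algCOMP\ with $T = \lceil \log_2\binom NK / R\rceil$ tests, so the rate is at least $R$ by construction. Substituting this $T$ into the error bound and writing $\log_2 N = \ln N/\ln 2$ yields $\epsilon \le \exp\!\big(\ln N\,(1 - \tfrac{\beta(1-1/K)^K}{R\ln 2})\big)$, whose exponent is negative for all large $N$ precisely because $R\ln 2 < \beta/e$; hence $\epsilon\to 0$, so $\epsilon\le\delta$ eventually, $R$ is achievable, and letting $R\uparrow\beta/(e\ln 2)$ completes the proof.

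The main obstacle is not conceptual — everything past Theorem \ref{chanboundthm} is bookkeeping — but there are a few points that need genuine care. One is the $o(1)$ coming from $(1-1/K)^K\to e^{-1}$, which must be controlled uniformly enough that it does not erode the constant $1/(e\ln2)$. A second is the ceiling in the choice of $T$, which one checks costs only a $1+o(1)$ factor in the rate. A third is the boundary case $\beta = 1$, where $K$ need not tend to infinity and the clean asymptotics of $(1-1/K)^K$ break down, so one argues directly or notes the bound degrades gracefully. Finally I would double-check the direction of every inequality, since the argument simultaneously pushes the rate up (wanting $T$ small) and the error down (wanting $T$ large), and the reconciliation hinges on $\log_2\binom NK$ growing fast enough relative to $K$.
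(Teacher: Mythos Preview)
Your proposal is correct and follows essentially the same route as the paper: both derive the error bound $\epsilon \le (N-K)(1-p(1-p)^K)^T$ (Theorem~\ref{chanboundthm}), specialise to $p=1/K$, and combine with the asymptotic $\log_2\binom NK \sim \beta K\log_2 N$ (Lemma~\ref{lem:binom}) to extract the constant $1/(e\ln 2)$. The only cosmetic difference is that the paper parametrises by $T = e(1+\delta)K\ln N$ and reads off the rate (Corollary~\ref{cor:chanbd}), whereas you fix $R$ first and solve for $T$; the edge-case worry you flag at $\beta=1$ is real but is not addressed in the paper either.
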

 We give an alternative proof of Theorem \ref{chanboundthm} in 
Remark \ref{rem:chan2}, and prove Theorem \ref{compthm}
in Appendix \ref{ap:comprate}.

For our new \algDD\ algorithm we have the following lower bound on rate.

\begin{theorem} \label{ddthm}
  For the \algDD\ algorithm with a Bernoulli$(1/K)$ test design, we have
    \[ R^*_{\mDD} \geq \frac{1}{e \ln 2} \min \left\{ 1, \frac{\beta}{1-\beta} \right\} \approx 0.53 \min \left\{ 1, \frac{\beta}{1-\beta} \right\} . \]
\end{theorem}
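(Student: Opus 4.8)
The plan is to exploit the fact (noted above) that \algDD\ errs \emph{only} by leaving a masked defective undeclared, so that
\[
  \pr(\algDD\text{ fails}) \;\le\; \sum_{i\in\K}\pr(i\text{ is masked}),
\]
and, since the columns of a Bernoulli design are exchangeable, to fix the defective set $\K$ once and for all and estimate a single term $\pr(1\text{ masked})$. Working with $p=1/K$ and $T=\big\lceil e\,(\max\{\beta,1-\beta\}+\varepsilon)\,K\ln N\big\rceil$ tests for an arbitrary $\varepsilon>0$, and using $\log_2\binom{N}{N^{1-\beta}}=(1+o(1))\,\beta K\log_2 N$ (valid since $K=N^{1-\beta}\to\infty$), this design has rate tending to $\beta/\!\big(e\ln 2\,(\max\{\beta,1-\beta\}+\varepsilon)\big)$. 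Because $\beta/\max\{\beta,1-\beta\}=\min\{1,\beta/(1-\beta)\}$ and $\varepsilon>0$ is arbitrary, it therefore suffices to show $\pr(\algDD\text{ fails})\to 0$ for this $T$.

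The crux is a conditioning that decouples the masking of a fixed defective across tests. Write $c_0:=\max\{\beta,1-\beta\}+\varepsilon$, so $T=(1+o(1))\,e c_0 K\ln N$. Condition first on the submatrix $\mat X_\K$ of defective columns: this fixes the outcome vector, the set $\mathcal N_0$ of negative tests with $T_0:=|\mathcal N_0|$, and, for each defective $i$, the set $U_i$ of tests in which $i$ is the \emph{only} defective (necessarily positive, so $U_i\cap\mathcal N_0=\varnothing$), with $u_i:=|U_i|$. Condition additionally on the random set $W$ of intruding non-defectives --- this set is determined by the entries $\{x_{jt}:j\notin\K,\ t\in\mathcal N_0\}$ alone. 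Now $i$ is masked exactly when every test of $U_i$ contains a member of $W$ (tests of $U_i$ can be spoiled only by intruders, since all other defectives are absent from them); this event is governed only by the entries $\{x_{jt}:j\in W,\ t\in U_i\}$, which are disjoint from those defining $W$ (and from $\mat X_\K$) and hence still i.i.d.\ Bernoulli$(p)$. As distinct tests use disjoint such entries, the $u_i$ tests of $U_i$ are ``covered'' by $W$ independently, each with probability $1-(1-p)^{|W|}$, so that
\[
  \pr\big(i\text{ masked}\ \big|\ \mat X_\K,\,W\big)\;=\;\big(1-(1-p)^{|W|}\big)^{u_i}.
\]

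It remains to estimate this. Let $G:=|W|$. One computes $\E G=(N-K)\big(1-p(1-p)^K\big)^T$; with $p=1/K$ and our $T$, and using $(1-1/K)^K=(1+o(1))e^{-1}$, this is at most $\bar\lambda:=N^{\,1-c_0+o(1)}$, so Markov's inequality gives $\pr(G>\bar\lambda\ln N)=O(1/\ln N)\to 0$. Keeping this truncation \emph{outside} the union bound (so that it need only be $o(1)$, not $o(1/K)$): on the event $G\le\bar\lambda\ln N$ we have $1-(1-p)^G\le pG\le\rho:=\bar\lambda\ln N/K=N^{\,\beta-c_0+o(1)}\to 0$ (as $\beta<c_0$), whence
\[
  \pr(\algDD\text{ fails})\;\le\;\pr\big(G>\bar\lambda\ln N\big)\;+\;K\,\E\big[\rho^{\,u_i}\big].
\]
Since $u_i\sim\mathrm{Bin}\big(T,\,p(1-p)^{K-1}\big)$, the probability generating function of the binomial gives $\E[\rho^{\,u_i}]=\big(1-p(1-p)^{K-1}(1-\rho)\big)^T\le\exp\!\big(-(T/eK)(1-\rho)\big)=N^{-c_0(1-\rho)+o(1)}$, using $(1-1/K)^{K-1}\ge e^{-1}$. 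Hence $\pr(\algDD\text{ fails})\le o(1)+K\,N^{-c_0(1-\rho)+o(1)}=o(1)+N^{\,1-\beta-c_0+o(1)}\to 0$, because $1-\beta-c_0=1-\beta-\max\{\beta,1-\beta\}-\varepsilon<0$. Letting $\varepsilon\downarrow 0$ gives $R^*_{\mDD}\ge\frac1{e\ln 2}\min\{1,\beta/(1-\beta)\}$.

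The step I expect to be the real obstacle --- indeed the only genuinely non-routine one --- is the decoupling in the second paragraph. Without conditioning on the intruding set $W$, the events ``test $t$ of $U_i$ is spoiled by an intruder'' are positively correlated across $t$ (through the shared question of which non-defectives avoid every negative test), and a single-test bound is far too weak: it recovers only the \algCOMP\ rate $\beta/(e\ln 2)$ and none of the improvement. Recognising that conditioning on $W$ renders these events \emph{exactly} independent --- giving the clean factor $(1-(1-p)^{|W|})^{u_i}$ --- is what drives the argument. The remaining ingredients (Markov control of $G$, kept outside the union bound so that a weak tail bound suffices, and the elementary estimates $(1-1/K)^K=(1+o(1))e^{-1}$ and $(1-1/K)^{K-1}\ge e^{-1}$) are standard but need a little care in the tight intermediate range $\beta\approx\tfrac12$.
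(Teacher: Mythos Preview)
Your proof is correct. The decoupling step is sound: conditioning on $\mat X_\K$ fixes the negative tests and the sets $U_i$; the intruder set $W$ is then determined by the non-defective entries in negative tests only, so the entries $\{x_{jt}:j\in W,\ t\in U_i\}$ remain fresh i.i.d.\ Bernoulli$(p)$, giving the clean formula $\pr(i\text{ masked}\mid \mat X_\K,W)=(1-(1-p)^{G})^{u_i}$. Your subsequent bounds (Markov on $G$ kept outside the union bound, the marginal $u_i\sim\mathrm{Bin}(T,p(1-p)^{K-1})$, the PGF evaluation, and the exponent comparisons $c_0>\beta$ and $c_0>1-\beta$) all check out.

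The paper reaches the same conclusion by a somewhat different route. It first derives an \emph{exact} expression for $\pr(\suc)$ by conditioning on $M_0$ (the number of negative tests) and then on $G$, using the multinomial/thinning structure to write the conditional success probability as $\phi_K(q^*(G),T-M_0)$; it then bounds $\phi_K$ by the same union bound you use, integrates the resulting expression over $G\mid M_0$ in closed form (a binomial moment-generating calculation) to produce a function $\Theta(T,m_0)$, and finally restricts $m_0$ to its Chernoff concentration window and estimates $\Theta$ there. The decoupling idea is identical to yours --- the paper's ``thinning $M_i\to L_i$ with parameter $(1-p)^G$'' is exactly your observation that each test of $U_i$ is spoiled independently with probability $1-(1-p)^{G}$ --- but the two arguments integrate out different variables first: the paper averages over $G$ (given $M_0$) and then concentrates $M_0$, whereas you average over $u_i$ via its \emph{marginal} law and handle $G$ by a crude Markov truncation. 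Your ordering is the more economical one for the asymptotic rate bound: it bypasses the exact expression, the auxiliary functions $\phi_K,\Phi_K,\Theta$, and the Chernoff step entirely. The paper's route, on the other hand, yields along the way an exact formula and a non-asymptotic bound (Lemma~\ref{lem:innersum}) that are useful for the finite-$N$ comparisons in Section~\ref{sec:simulation}.
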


In Appendix \ref{sec:ddsuc} we give an exact expression for the error probability of \algDD; in Appendix
\ref{sec:ddprob2}  we bound this expression, giving an easier-to-use approximation; and in Appendix
\ref{ap:ddrate} we convert this into the above rate bound.

Comparing Theorems \ref{compthm} and \ref{ddthm}, we see that for $0 < \beta < 1$,  the performance guarantees for \algDD\ strictly exceed those for \algCOMP.

Finally, for the \algsss\ algorithm, we have the following upper bound on rate. Since \algsss\ is essentially
optimal for Bernoulli tests, we argue that our new detection algorithms should be compared with this as the limit
of what may be possible with Bernoulli test designs.

\begin{theorem} \label{sssthm}
  For the \algSSS\ algorithm with any Bernoulli test design, we have
    \[ R^*_{\SSS} \leq \frac{1}{e \ln 2} \frac{\beta}{1-\beta} \approx 0.53 \frac{\beta}{1-\beta} . \]
\end{theorem}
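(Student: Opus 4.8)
The plan is to lower-bound the error probability of \algSSS\ by the probability that a strictly smaller satisfying set than $\K$ happens to exist, which automatically makes \algSSS\ wrong. Say a defective item $i \in \K$ is \emph{masked} if every test containing $i$ also contains at least one other defective (a slightly weaker notion than the one in Section~\ref{sec:dd}, which refers to $\pd$ rather than $\K$). If $i$ is masked then $\K \setminus \{i\}$ is itself a satisfying set of size $K-1 < K$, so the smallest satisfying set is not $\K$ and \algSSS\ errs; hence $\epsilon \ge \pr(\text{some defective is masked})$. Since a Bernoulli design treats the items symmetrically we may fix $\K = \{1,\dots,K\}$. Writing $\lambda := p(1-p)^{K-1}$, for a single Bernoulli$(p)$ test the probability that a given defective $i$ is the unique defective present is exactly $\lambda$, so by independence over tests $\pr(i\text{ masked}) = (1-\lambda)^T$; if $M$ is the number of masked defectives then $\E M = K(1-\lambda)^T$.

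To turn this first moment into a lower bound on $\pr(M \ge 1)$ I would bound the variance. A short case analysis of a single test (splitting on which of two fixed defectives $i \ne j$ lie in it) shows that it is harmless for both simultaneously with probability exactly $1 - 2\lambda$, whence $\pr(i,j\text{ both masked}) = (1-2\lambda)^T \le (1-\lambda)^{2T} = \pr(i\text{ masked})\,\pr(j\text{ masked})$; so the masking events are pairwise negatively correlated and $\mathrm{Var}(M) \le \E M$. The second-moment inequality $\pr(M\ge1) \ge (\E M)^2/\E[M^2]$ then gives $\pr(M \ge 1) \ge \E M/(1+\E M)$, so everything reduces to showing that $\E M \to \infty$ whenever the rate is too large.

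For the final step I would optimise over $p$ and substitute the sparsity regime. The function $\lambda(p) = p(1-p)^{K-1}$ is maximised at $p = 1/K$, where it equals $\tfrac1K(1-\tfrac1K)^{K-1} = \tfrac1{eK}(1+o(1))$, so $\lambda \le \tfrac1{eK}(1+o(1))$ uniformly in $p$; in particular $\lambda \to 0$ and $\ln(1-\lambda) = -\lambda(1+o(1))$, so $\ln \E M = \ln K - T\lambda(1+o(1)) \ge \ln K - \tfrac{T}{eK}(1+o(1))$. Now suppose $\log_2\binom NK / T > R$ for some fixed $R > \tfrac{1}{e\ln 2}\tfrac{\beta}{1-\beta}$. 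Using $\log_2\binom NK \le K\log_2(eN/K) = \beta K\log_2 N\,(1+o(1))$ and $\ln K = (1-\beta)\ln N$ for $K = N^{1-\beta}$, this forces $\tfrac{T}{eK} < \tfrac{\beta\log_2 N}{eR}(1+o(1)) = \tfrac{\beta}{eR\ln 2}\ln N\,(1+o(1))$, whence $\ln\E M \ge \bigl[(1-\beta) - \tfrac{\beta}{eR\ln 2}\bigr]\ln N - o(\ln N)$. Since $R > \tfrac{1}{e\ln 2}\tfrac{\beta}{1-\beta}$ makes the bracketed constant strictly positive, $\E M \to \infty$, so $\epsilon \to 1$; hence no rate exceeding $\tfrac{1}{e\ln2}\tfrac{\beta}{1-\beta}$ is achievable, which is the claimed bound.

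The hard part will be the variance step rather than the bookkeeping: the first moment alone only controls $\E M$, and one genuinely needs the single-test identity $\pr(i,j\text{ both masked}) = (1-2\lambda)^T$ — and hence the pairwise negative correlation — to deduce $\pr(M\ge1) \to 1$ and rule out the possibility that masking, while likely on average, is confined to atypical realisations. (If one preferred to avoid the second-moment argument entirely, the analogous single-test identity $\pr(\text{all of }S\text{ masked}) = (1-|S|\lambda)^T$ for arbitrary $S \subseteq \K$ could be used in its place.) A secondary, routine point is that all the $1+o(1)$ error terms must be kept uniform in $p$, which is exactly what lets the bound hold for \emph{any} Bernoulli design and not just the optimal choice $p = 1/K$.
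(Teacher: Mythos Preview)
Your argument is correct and shares the paper's core observation: if some defective $i$ never appears as the unique defective in any test, then $\K\setminus\{i\}$ is a strictly smaller satisfying set and \algSSS\ fails. The paper packages this via the inclusion--exclusion function $\phi_K(q,T)$ of Lemma~\ref{lem:maskprob}, which equals $\pr(\text{no defective masked})$; it shows $\phi_K$ is increasing in $q$ (Lemma~\ref{lem:nonincr}) so as to replace $\lambda=p(1-p)^{K-1}$ by its worst-case value $1/(e(K-1))$ uniformly in $p$, and then bounds $\phi_K$ from above by the second Bonferroni truncation (Lemma~\ref{lem:asymp}), obtaining $\pr(\suc)\le 2/3$ whenever the rate exceeds the threshold. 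You instead run a second-moment (Paley--Zygmund) argument directly on the count $M$ of masked defectives, deducing $\pr(M\ge1)\ge \E M/(1+\E M)\to 1$ and hence the stronger conclusion $\epsilon\to 1$. The two routes are very close --- your pairwise identity $\pr(i,j\text{ both masked})=(1-2\lambda)^T$ is exactly the second-order term in the paper's Bonferroni bound, and your parenthetical alternative $(1-|S|\lambda)^T$ is the full inclusion--exclusion identity behind $\phi_K$ --- but yours avoids the separate monotonicity lemma and delivers a strong rather than a weak converse.
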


In Appendix \ref{sec:sssprob2} we give a upper bound on the success probability of
the \algSSS\ algorithm, which in Appendix \ref{ap:sssrate} we convert to the
upper bound on rate of Theorem \ref{sssthm}. We also give an lower bound on the success probability of the \algSSS\ algorithm,
in Appendix \ref{sec:sssprob}.

From Theorems \ref{ddthm} and \ref{sssthm}, we see that the \algDD\ algorithm achieves
the same rate as \algSSS\ for $\beta \leq \frac12$, and hence is essentially optimal in this regime.
We note also that for
  \[ \beta \leq \frac{e \ln 2}{1 + e \ln 2} \approx 0.65 ,\]
the rate of the \algSSS\ algorithm is bounded away from the information bound $C(\beta) \leq 1$, which is achievable
for adaptive testing \cite{johnsonc10}. There are two possible explanations for this. One is that
Bernoulli tests are suboptimal in these regimes -- and very far from optimal in the denser cases.
The other is that there is an `adaptivity gap', and no nonadaptive algorithms can perform as well
adaptive algorithms here, with a gap that increases as the problem becomes denser.

Unfortunately, the complicated sequential nature of \algSCOMP\ makes it difficult to analyse
mathematically. However,
simulations in Section \ref{sec:simulation} show that in practice \algSCOMP\ performs better than 
\algDD. Hence, we conjecture that
  \[ R^*_{\mathtt{SCOMP}} \begin{cases} = \displaystyle \frac{1}{e \ln 2} \frac{\beta}{1-\beta} & \text{for $\beta \leq \frac12$,} \\
                                        \geq  \displaystyle \frac{1}{e \ln 2}  & \text{for $\beta > \frac12$.} \end{cases} \]

The proofs of these theorems is sometimes quite involved, and we save details for the appendices.
In Appendix \ref{sec:error}, we prove explicit
bounds on the error probability of \algDD\ and \algSSS. In Appendix \ref{sec:capacity} we convert the error probability
bounds into the bounds on achievable rates we see above. In Appendix \ref{sec:facts}, we summarize some elementary probability facts we will use. 
  
\section{Simulations} \label{sec:simulation}

In this section, we run simulations of our new algorithms, and compare our theoretical bounds
to empirical results.
All simulations were run with $N = 500$ items, of which $K = 10$ were defective
(except for Figure \ref{graph3}), and
Bernoulli test matrices with parameter $p = 1/K$.
Each plotted point is based on the average success rate from $1000$ simulations.

\begin{figure}
	\centering
		\includegraphics[scale=.6]{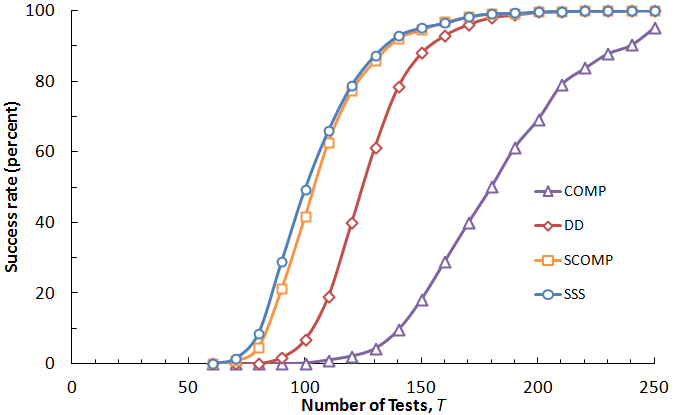}
	\caption{Performance of the \algCOMP, \algDD, \algSCOMP\ and \algSSS\ algorithms for noiseless
	  group testing with a Bernoulli test design with $N = 500$, $K = 10$, $p = 1/10$.}
	\label{graph1}
\end{figure}

Figure \ref{graph1} shows the performance of the algorithms featured in this paper.
Our \algDD\ algorithm far outperforms the \algCOMP\ algorithm of Chan \emph{et al.}, and our \algSCOMP\
algorithm is better still. For this moderately sized example ($N = 500$, $K = 10$,
$T \approx 100$), it was possible to use an integer linear programming solver to run the
\algSSS\ algorithm (even without the improvement we mention in Section \ref{sec:sss}). Promisingly, the computationally simple \algSCOMP\ algorithm has performance very close to that
of the essentially optimal but computationally hard \algSSS\ algorithm; the performance is
particularly close in the most important high success probability regime.

\begin{figure}
	\centering
		\includegraphics[scale=.6]{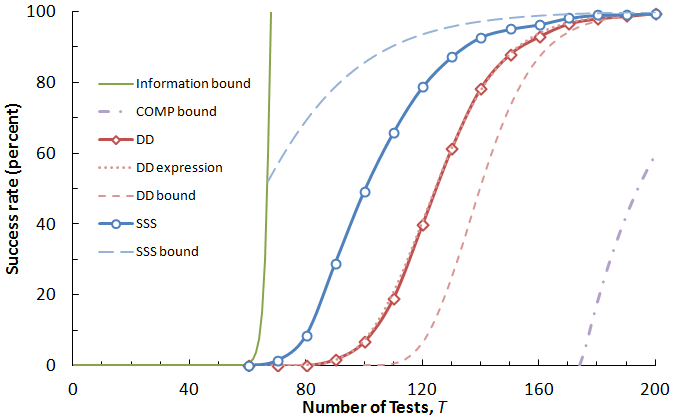}
	\caption{Performance of the \algDD\ and \algSSS\ algorithms, with the	
		information lower bound (Theorem \ref{thm:basic}), the \algCOMP\ lower bound (Corollary
		\ref{cor:chanbd})  of Chan \emph{et al.}, our exact expression for \algDD\ (Theorem \ref{thm:lower}),
		our \algDD\ lower bound (see (\ref{eq:ddsucbd}) in  Lemma \ref{lem:innersum}), and our \algSSS\ upper bound (Theorem \ref{prop:sample}) for	noiseless group testing with a
		Bernoulli test design with $N = 500$, $K = 10$, $p = 1/10$.}
	\label{graph2}
\end{figure}

Figure \ref{graph2} shows the performance of the \algDD\ algorithm. The algorithm does indeed perform as predicted analytically, and our bound is reasonably tight, especially in the 
high success
 probability regime. Note also that our bound on success probability is a big improvement on the Chan \etal bound for the \algCOMP\ algorithm. While performance of \algDD\ is far from the information theoretic bound, the essentially optimal \algSSS\ algorithm shows that the bound is very far from achievable with a Bernoulli test design and $K$ unknown.

\begin{figure}
	\centering
		\includegraphics[scale=.6]{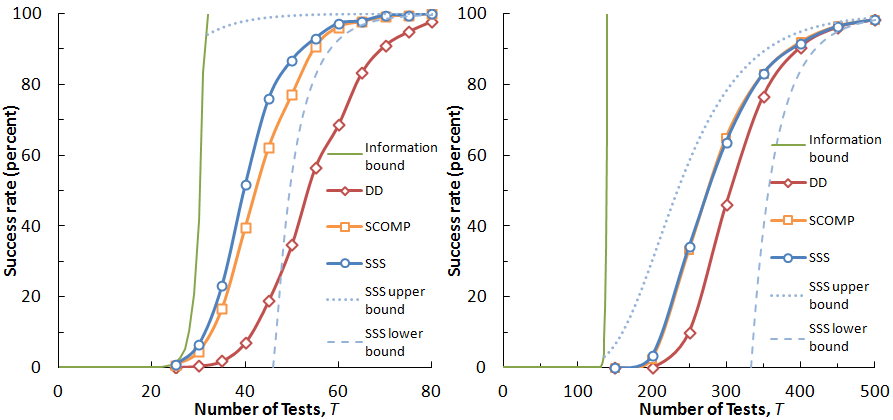}
	\caption{Performance of the \algSSS\ algorithm and associated bounds (Theorems \ref{thm:basic},
	  \ref{thmsss} and \ref{prop:sample}) and the \algSCOMP\ algorithm for noiseless group testing
	  with a Bernoulli test design with $N = 500$ in a sparse case (left: $K = 4$, $p = 1/4$,
$\beff = 0.7769)$
	  and a dense case (right: $K = 25$, $p = 1/25$, $\beff = 0.4820$.)
	\label{graph3}}
\end{figure}

Figure \ref{graph3} illustrates the difference between a sparse case (left subfigure) and dense case (right subfigure) of group testing. In the sparse case, our \algSSS\ upper bound is generally loose compared to the information bound, while the lower bound is generally right, especially in the high success rate regime. Here, the \algSCOMP\ algorithm slightly underperforms the more computationally difficult \algSSS\ algorithm. In the dense case, on the other hand, our \algSSS\ upper bound is much tighter than the information bound, while the lower bound is fairly loose away from the high success rate regime. Here, the \algSCOMP\ algorithm performs essentially equivalently to the difficult \algSSS\ algorithm, and even \algdd performs close to the \algsss.

We can understand the performance illustrated in Figure \ref{graph3} in terms of the rate
results of Appendix \ref{sec:capacity}. In particular, given $N$ and $K$, we write 
$\beff = 1 - \ln K/\ln N$ for the value such that $K = N^{1-\beff}$. The sparse case has
the value $\beff = 0.7769$, and corresponds to the region $\beta > 1/2$ where the rate bounds are less tight and the \algdd algorithm is probably not optimal. 
In contrast, the denser case
has $\beff = 0.4820$ and corresponds to the region $\beta < 1/2$ where the \algdd algorithm 
asymptotically converges to the \algsss bound.

\section{Conclusions and further work}

We have introduced several new algorithms for noiseless non-adaptive group testing, 
including the \algdd algorithm and \algscomp algorithm.
We have demonstrated by 
bounds on their maximum achievable rates and by direct simulation that they perform well compared with known algorithms
in the literature, and in some denser cases are asymptotically optimal.

We briefly mention some problems for future work:
\begin{enumerate}
\item To give asymptotic bounds on the performance of the \algsss algorithm, which would require a more detailed combinatorial analysis.
Such asymptotic bounds
 would allow us to deduce tighter bounds on the value of $C(\beta)$ for $\beta > 1/2$.
\item To compare the performance of algorithms under Bernoulli test designs and other matrix designs, including
the LDPC-inspired designs of Wadayama \cite{wadayama}.
\item To develop similar algorithms and bounds for the noisy case. 
\end{enumerate}
\appendix

\section{Proofs: bounds on error probability} \label{sec:error}

\subsection{Information bound}

For comparison, we note the information  bound
in a form due to  Baldassini, Johnson and Aldridge \cite[Theorem 3.1]{johnsonc10}:
\begin{theorem}[\cite{johnsonc10}] \label{thm:basic}
Consider testing a set of $N$ items with $K$
defectives. Any algorithm to recover the defective set with $T$
tests has success probability satisfying
  \begin{equation} \label{eq:infobd} \pr(\suc) \leq \frac{2^T}{\binom{N}{K}}. \end{equation}
\end{theorem}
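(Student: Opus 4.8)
The plan is to use a simple counting argument resting on the observation that a detection algorithm can return no more distinct answers than there are possible test-outcome vectors. First I would fix the test design $\mat X$, treating it as deterministic for the moment, and note that the outcome vector $\vec y = \bigvee_{i \in \K} \vec x_i$ lives in $\{\zero,\one\}^T$, so it takes at most $2^T$ values as $\K$ ranges over $\binom{\N}{K}$. Since the estimate $\Khat$ is a function of $\vec y$ alone, the map $\K \mapsto \Khat(\vec y)$ takes at most $2^T$ distinct values in ${\mathcal P}(\N)$. More precisely, the defective sets that are recovered correctly — those $\K$ with $\Khat(\vec y(\K)) = \K$ — inject into $\{\zero,\one\}^T$ via $\K \mapsto \vec y(\K)$, because such a $\K$ is reconstructed from $\vec y(\K)$; hence at most $2^T$ of the $\binom{N}{K}$ candidate defective sets can be recovered.

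Next, since $\K$ is drawn uniformly from the $\binom{N}{K}$ subsets of $\N$ of size $K$, the event $\{\Khat = \K\}$ has probability at most $2^T/\binom{N}{K}$, which is exactly the claimed bound in the deterministic-design case. Finally, to cover random test designs, I would condition on $\mat X$: for each realisation of $\mat X$ the algorithm is still just a function of $\vec y$ (even if that function is allowed to depend on $\mat X$), so the argument above yields $\pr_{\K}(\Khat = \K \mid \mat X) \leq 2^T/\binom{N}{K}$ for every $\mat X$; averaging over $\mat X$ then gives $\pr(\suc) = \pr_{\mat X, \K}(\Khat = \K) \leq 2^T/\binom{N}{K}$.

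There is no real obstacle here; the only point needing a little care is the bookkeeping for random designs, namely that conditioning on $\mat X$ reduces to the deterministic case and that taking the expectation over $\mat X$ preserves the inequality. If one wants the stronger ``strong converse'' statement with explicit error bounds referred to in the text, one instead estimates, for a given budget $T$, how the bound degrades as $T$ drops below $\log_2\binom{N}{K}$, but the inequality \eqref{eq:infobd} itself follows purely from the counting above.
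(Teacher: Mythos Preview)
Your argument is correct and is precisely the ``simple counting argument'' the paper alludes to; the paper itself does not give a proof of this theorem but cites it from \cite{johnsonc10} (and \cite{chan2011}), merely noting that it follows from counting. Your handling of the random-design case by conditioning on $\mat X$ and averaging is the natural way to complete the argument, and nothing further is needed for \eqref{eq:infobd}.
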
 

This theorem strengthened a result of Chan \etal~\cite[Theorem 1]{chan2011}, who
referred to their bound as `folklore', noting that similar bounds
appear in the literature, such as \cite{dyachkov3}. 
\subsection{\algcomp} \label{sec:compprob}

Chan \etal \cite[equation (8)]{chan2011}\cite[equation (34)]{chan2}
give the following bound on the success probability of the \algcomp algorithm: 
\begin{theorem}  \label{chanboundthm}
  For noiseless group testing with a Bernoulli$(p)$ test design, the success probability of the \algcomp
  algorithm is bounded by
\begin{equation} \pr(\suc) \geq 1 - (N-K)\big(1-p(1-p)^K\big)^T . \label{eq:comperror}
\end{equation}
\end{theorem}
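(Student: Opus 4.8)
The plan is to compute the error probability of \algcomp\ directly and close with a union bound. Recall from Section \ref{sec:comp} that \algcomp\ outputs $\Khsub{\algcomp} = \pd = \DND^\compl$, and that this estimate always satisfies $\Khsub{\algcomp} \supseteq \K$. Hence \algcomp\ errs precisely when some non-defective item lies in $\pd$, i.e.\ when the number $G$ of intruding non-defectives is at least $1$; equivalently, when there exists a non-defective item $i \in \N \setminus \K$ that appears in no negative test.

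First I would fix a non-defective item $i$ and a single test $t$ and identify the event $E_{it}$ that test $t$ witnesses $i \in \DND$, namely $E_{it} = \{x_{it} = \one\} \cap \{x_{jt} = \zero \text{ for all } j \in \K\}$. Under the Bernoulli$(p)$ design the entry $x_{it}$ and the entries $(x_{jt})_{j \in \K}$ are independent, so $\pr(E_{it}) = p(1-p)^K$, and thus $\pr(E_{it}^\compl) = 1 - p(1-p)^K$.

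Next, since the columns of the test matrix are independent across the $T$ tests, the events $E_{i1}, \dots, E_{iT}$ are independent, so the probability that item $i$ appears in no negative test is $\pr\big(\bigcap_{t=1}^T E_{it}^\compl\big) = \big(1 - p(1-p)^K\big)^T$. Taking a union bound over the $N - K$ non-defective items gives $\pr(\err) \leq (N-K)\big(1-p(1-p)^K\big)^T$, which rearranges to the stated bound on $\pr(\suc)$.

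There is no real obstacle here: the argument is elementary, and the only point requiring care is the independence structure of the Bernoulli design — both across the $K$ defectives within a single test (so that $\pr(E_{it})$ factors as $p(1-p)^K$) and across the $T$ tests (so that the per-item probability is a clean power). This reproduces the bound of Chan \etal; we revisit it from a slightly different angle in Remark \ref{rem:chan2}.
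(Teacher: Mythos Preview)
Your argument is correct and is essentially the original direct proof of Chan \etal: compute, for a fixed non-defective item, the probability that no test simultaneously contains it and excludes all defectives, then union-bound over the $N-K$ non-defectives. The paper's own derivation in Remark \ref{rem:chan2} is deliberately different: it conditions on $M_0$, the number of negative tests, uses $G \mid M_0 \sim \mathrm{Bin}(N-K,(1-p)^{M_0})$, applies the Bernoulli inequality $(1-x)^{N-K} \geq 1-(N-K)x$ to $\pr(G=0\mid M_0)$, and then collapses the sum over $m_0$ via the binomial theorem. Your route is shorter and avoids both the conditioning and the Bernoulli inequality; the paper's route is chosen precisely because it reuses the $(M_0,G)$ machinery developed for the \algdd analysis, thereby showing that the \algcomp bound drops out as a special case of that framework. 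One minor wording point: you write ``columns of the test matrix are independent across the $T$ tests'' --- in the paper's indexing $\mat X = (x_{it})$ with $i$ an item and $t$ a test, each test is a column, so this is fine, but the independence you actually use is simply that all entries of $\mat X$ are i.i.d., which makes the events $E_{i1},\dots,E_{iT}$ (each a function of a distinct column) independent.
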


By differentiation, it is easy to see that   (\ref{eq:comperror}) is maximised at
$p = 1/(K+1)$, agreeing
with Johnson and Sejdinovic's argument that $p = 1/K$ is asymptotically
optimal in the $\lim_{K\to\infty} \lim_{N\to\infty}$ regime \cite{johnsonc8}.
Note that we show in Remark \ref{rem:chan2} below
that  Theorem \ref{chanboundthm} can be recovered using our techniques.

\subsection{\algDD: exact expression} \label{sec:ddsuc}

We now derive an exact expression for the probability that the \algdd 
algorithm
recovers  the defective set exactly. It is helpful to mentally sort the rows and columns of the testing matrix $\mat X$
(and outcome vector $\vec y$) in a suitable way; this implies no loss of generality. This
is illustrated in Figure \ref{fig:matrix}.

\begin{figure}
\begin{center}
\includegraphics{./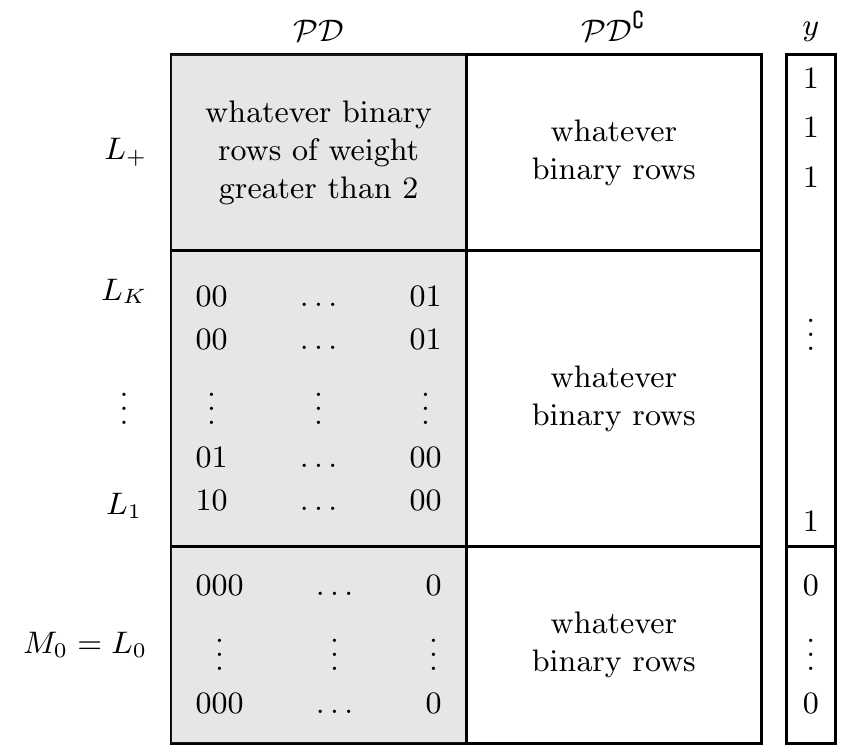}
\caption{The testing matrix $\mat X$, where the rows have been grouped 
according to the partition $\vc{L}$ and the columns into the 
two subsets $\pd$ and $\pd^\compl$. 
The shaded area is the submatrix $\subm$. Notice that the variables 
$L_0,\ldots, L_+$ do not index single rows but groups of rows, according to the 
definition of $\vc{L}$ \eqref{ldef}. }
\label{fig:matrix}
\end{center}
\end{figure}

 Recall from Definition \ref{def:nd} that we write $\pd = \DND^\compl$ for the set
of possible defectives (items which do not appear in any negative test).
It will also be convenient here to, without loss of generality, label the
actual $K$ defectives as $\K = \{ 1,2,\dots, K\}$.
Note that $\K \subseteq \pd$, and the only type of error we can make
is a false negative, when a defective item is masked.
It will be useful to consider the following partition of the number of tests $T$, which depends on the random matrix $\mat X$ and the defective set $\K$:
  \begin{align}
    L_0 &= \text{$\#$ tests with no defective items in,} \notag \\  
L_i &= \text{$\#$ tests containing $i$ and no other element of  $\pd$, for $i=1, \ldots, K$,} \label{ldef} \\
    L_+ &= \text{$\#$  remaining tests.} \notag
  \end{align}
If $L_i \neq 0$ for some $i \in \K$ then the \algdd algorithm correctly identifies the defective element
$i$. Thus the success probability of the \algdd algorithm is precisely the
probability that $L_i \neq 0$ for all $i \in \K$
 \begin{equation} \label{eqsuc}
   \pr(\suc) = \pr \left( L_1 \neq 0, \dots, L_K \neq 0 \right) .
 \end{equation}
For this reason,
we want to know the distribution of $\vec L = (L_0, L_1, \dots, L_K, L_+)$.
Unfortunately the distribution of $\vec L$ is quite complicated, but
we will be able to analyse it conditioned on the number of possible
defectives $|\pd| = K +G$ and a related random vector $\vec M$.
Recall from Section \ref{sec:dd} that $G$ is the number of nondefective
items that are nonetheless in $\pd$.
We define $\vec M = (M_0, M_1, \dots, M_K, M_+)$ as follows:
  \begin{align*}
    M_0 &= \text{$\#$ tests with no defective items in,} \\
    M_i &= \text{$\#$ tests with $i$ the only defective item in, for $i=1, \ldots, K$} \\
       M_+ &= \text{$\#$ tests with two or more defective items in.}
  \end{align*}
Note that this is similar to the definition of $\vec L$, but with the
set of possible defectives $\pd$ replaced by the set of actual defectives
$\K$.
Write
  \begin{align*}
    q_0 &= \pr(\text{no defectives}) = (1-p)^K , \\
    q_1 &= \pr(\text{$1$ the only defective}) = p(1-p)^{K-1} , \\
    q_+ &= \pr(\text{two or more defectives}) = 1 - q_0 - Kq_1 , \\
\vc{q} & =(q_0, \underbrace{q_1, \dots, q_1}_{\text{$K$ terms}}, q_+). 
  \end{align*}
The following lemma then holds:
\begin{lemma} \label{lem:sampling}
Using a  {\rm Bernoulli}$(p)$ test design,  $\vc{M} = (M_0, M_1, \dots, M_K, M_+)$  is multinomial with probability distribution $\pr_{T; \vc{q}}$ defined as
	\begin{equation} \label{eqM}
		\pr_{T; \vc{q}}(m_0, m_1, \ldots, m_k, m_+)=\df{T!}{m_+!\prod_{i=0}^K m_i!}q_0^{m_0} q_1^{m_1+\cdots+m_k} q_+^{m_+}\ ,
		\end{equation}
for $\sum_{i=0}^K m_i+m_+=T$.
\end{lemma}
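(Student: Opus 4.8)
The plan is to exploit the independence of the rows of the Bernoulli test matrix and reduce the claim to the elementary fact that the counts of i.i.d.\ categorical trials are multinomially distributed. First I would observe that $\vec M$ depends on $\mat X$ only through the $T \times K$ submatrix of columns indexed by the true defective set $\K = \{1,\dots,K\}$, and that the rows of this submatrix are i.i.d., each of its entries being $\one$ independently with probability $p$ by the definition of a Bernoulli$(p)$ design.

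Next I would fix a single test $t$ and classify it according to which defective items it contains. The events ``test $t$ contains no defective'', ``test $t$ contains exactly defective $i$ and no other defective'' (for $i = 1, \dots, K$), and ``test $t$ contains two or more defectives'' are mutually exclusive and exhaustive, and they correspond exactly to the increments of $M_0$, $M_i$, and $M_+$ respectively. A direct computation with the independent $\bino(1,p)$ entries gives probability $(1-p)^K = q_0$ for the first event, $p(1-p)^{K-1} = q_1$ for each of the $K$ middle events, and $1 - q_0 - Kq_1 = q_+$ for the last; these sum to $1$, confirming that $\vec q$ is a probability vector.

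Since the $T$ tests are independent and each falls into one of these $K+2$ categories with the same probability vector $\vec q$, the vector of category counts $\vec M = (M_0, M_1, \dots, M_K, M_+)$ is by definition multinomial with $T$ trials and parameter $\vec q$. Writing out the multinomial pmf and collecting the $K$ equal probabilities $q_1$ into the single factor $q_1^{m_1 + \cdots + m_k}$ yields exactly \eqref{eqM}. I do not expect a genuine obstacle here; the only point requiring a moment's care is checking that the $K$ middle events are pairwise disjoint — a test cannot simultaneously be the unique host of two distinct defectives — so that the classification into $K+2$ categories is well defined and the multinomial structure applies.
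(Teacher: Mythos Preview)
Your proposal is correct and is exactly the natural argument: the $T$ tests are i.i.d.\ and each falls into one of the $K+2$ mutually exclusive, exhaustive categories with probabilities $q_0, q_1,\dots,q_1, q_+$, so the counts form a multinomial. The paper does not actually write out a proof of this lemma --- it is stated and then immediately used --- so you are simply supplying the routine details the authors left implicit.
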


By analysing the relationship between $\vc{M}$ and $\vc{L}$, we are able to give the 
probability of success of
the \algdd algorithm. The strategy is as follows: from \eqref{eqM} we have the distribution
of $\vec M$; given $\vec M$,  \eqref{eqS} below gives the distribution
of $G$; given $\vec M$ and $G$, \eqref{eqL} gives the distribution
of $L$; and given the distribution of $L$, we have from
\eqref{eqsuc} the probability of success.

Putting this all together, we can derive an exact expression for the success probability
of the \algdd algorithm,
in terms of the binomial mass function $\binomi{n}{t}{k} := \binom{n}{k} t^k(1-t)^{n-k}$ and
$\phi_K$, defined as
\begin{equation}
\label{eq:phi}
	\phi_K(q,T) := \sum_{\ell = 0}^K (-1)^\ell \binom{K}{\ell} (1 - \ell q)^T\ ,
\mbox{\;\;\; for  $q\in[0, 1]$.}
\end{equation}
Appendix \ref{sec:facts} summarises some well-known results from probability theory, including properties of the multinomial distribution. In particular,
Lemma \ref{lem:maskprob} shows that $\phi_K$ gives the probability of a
set of components of 
a multinomial being non-zero, in a certain symmetric situation.
\begin{theorem}\label{thm:lower}
Given a {\rm Bernoulli}$(p)$ test design, the success probability of the 
\algdd algorithm is 
\begin{equation} \label{eq:ddsuc}
\pr(\suc) =\sum_{m_0 = 0}^{T} \sum_{g = 0}^{N-K} \binomi{T}{q_0}{m_0} 
\binomi{N-K}{ (1-p)^{m_0}}{g} \Phi_K(g,m_0)\ , \end{equation}
where we write
$ \Phi_K(g,m_0) = \phi_K \left(q^*(g), T-m_0\right)$ for
 $q^*(g) = q_1 (1-p)^g/(1-q_0)$ . 
\end{theorem}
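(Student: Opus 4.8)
The plan is to evaluate $\pr(\suc) = \pr(L_1 \neq 0, \dots, L_K \neq 0)$ from \eqref{eqsuc} by conditioning successively on $M_0$ and on $G$, and then identifying the conditional law of $(L_1,\dots,L_K)$ as a symmetric multinomial to which Lemma \ref{lem:maskprob} applies. First, since a test contains no defective iff each of the $K$ defective columns has a $\zero$ there, we have $L_0 = M_0$ and $M_0 \sim \bino(T, q_0)$ (the marginal of the multinomial $\vec M$ of Lemma \ref{lem:sampling}), which gives the factor $\binomi{T}{q_0}{m_0}$. Next, conditioning on $\{M_0 = m_0\}$ together with which $m_0$ tests are the negative ones, a non-defective item lies in $\pd$ precisely when its i.i.d.\ Bernoulli$(p)$ column is $\zero$ on all of those $m_0$ tests, an event of probability $(1-p)^{m_0}$ independently over the $N-K$ non-defectives; hence $G \mid M_0 = m_0 \sim \bino(N-K, (1-p)^{m_0})$, the law depending on the negative set only through $m_0$, which gives the factor $\binomi{N-K}{(1-p)^{m_0}}{g}$.

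The heart of the argument is the conditional law of $\vec L$ given $M_0 = m_0$ and $G = g$. Fix the negative set and the set $I$ of $g$ intruding non-defectives; by exchangeability of the rows and of the non-defective columns, what follows depends only on $(m_0,g)$. Because the columns of $\mat X$ are independent across tests and both conditioning events are products of per-test (resp.\ per-column) constraints, conditionally the restriction of $\mat X$ to the $T-m_0$ positive tests has: the $K$ defective entries of each positive test distributed as $K$ independent Bernoulli$(p)$ variables conditioned on not all being $\zero$; the $g$ intruder-column entries i.i.d.\ Bernoulli$(p)$; these two families independent of one another; and everything independent across the positive tests. A positive test contributes to $L_i$ exactly when its defective pattern is the singleton $\{i\}$ and it contains no intruder -- independent events of probabilities $q_1/(1-q_0)$ and $(1-p)^g$ respectively -- so each positive test, independently, contributes to $L_i$ with probability $q^*(g) = q_1(1-p)^g/(1-q_0)$, and to at most one of the $L_i$. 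Hence $(L_1,\dots,L_K,L_+)$, conditioned on $(M_0,G) = (m_0,g)$, is multinomial on $T-m_0$ trials with the symmetric probability vector $(q^*(g),\dots,q^*(g),1-Kq^*(g))$.

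Applying Lemma \ref{lem:maskprob} (the inclusion--exclusion identity for the probability that a prescribed set of components of a symmetric multinomial are all non-zero) then gives $\pr(L_1 \neq 0,\dots,L_K \neq 0 \mid M_0 = m_0, G = g) = \phi_K(q^*(g), T-m_0) = \Phi_K(g,m_0)$. Summing $\pr(M_0 = m_0)\,\pr(G = g \mid M_0 = m_0)\,\Phi_K(g,m_0)$ over $0 \leq m_0 \leq T$ and $0 \leq g \leq N-K$ then yields the claimed expression \eqref{eq:ddsuc}.

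The main obstacle is the middle step: one must carefully justify that conditioning on $M_0$ and $G$ alone -- rather than on the finer vector $\vec M$ or on the identities of the intruders -- leaves the positive tests mutually independent and leaves the intruder columns i.i.d.\ Bernoulli$(p)$ and independent of the defective configuration, so that the `private test for $i$' event carries the product probability $(q_1/(1-q_0))\cdot(1-p)^g$. A slightly more computational but equivalent route first conditions on the full $\vec M$ of Lemma \ref{lem:sampling}, whence $L_i \mid \vec M, G = g \sim \bino(M_i, (1-p)^g)$ independently over $i$; averaging $\prod_{i=1}^K \bigl(1 - (1 - (1-p)^g)^{M_i}\bigr)$ over the conditional multinomial law of $(M_1,\dots,M_K,M_+)$ -- using inclusion--exclusion and the binomial probability generating function -- recovers the same $\phi_K(q^*(g), T-m_0)$.
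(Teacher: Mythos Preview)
Your proposal is correct and follows essentially the same strategy as the paper: condition on $M_0$, then on $G$, identify the conditional law of $(L_1,\dots,L_K)$ as a symmetric multinomial in $T-m_0$ trials with common parameter $q^*(g)$, and invoke Lemma~\ref{lem:maskprob}. The only cosmetic difference is that the paper obtains this conditional multinomial by first writing down the conditional law of $(M_1,\dots,M_K,M_+)$ given $M_0=m_0$ (Lemma~\ref{lem:multifacts}.\ref{it:cond}) and then repeatedly applying the multinomial ``split'' property (Lemma~\ref{lem:multifacts}.\ref{it:split}) to thin each $M_i$ into $(L_i,M_i-L_i)$, whereas you argue the per-positive-test probability $q^*(g)=\dfrac{q_1}{1-q_0}\,(1-p)^g$ directly from the column independence and the exchangeability reduction you describe; your alternative route via $\prod_i\bigl(1-(1-(1-p)^g)^{M_i}\bigr)$ is exactly the paper's thinning picture viewed before aggregation.
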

\begin{proof} The key is to condition on the value of $M_0$. By Lemma
\ref{lem:multifacts}.\ref{it:marginal}, 
$M_0 \sim \bino( T, q_0)$, and by Lemma \ref{lem:multifacts}.\ref{it:cond}, conditioned on $M_0 = m_0$, the vector $\vc{M}' =
(M_1, \ldots, M_k, M_+) \sim \pr_{T-m_0; \vc{q}'}$, where
$$ \vc{q}' = \left( \frac{q_1}{1-q_0}, \ldots, 
\frac{q_1}{1-q_0}, \frac{q_+}{1-q_0} \right).$$
Next, given $M_0$, we can find the distribution of $G$, the number of intruding non-defectives.
First, all $K$ actual defectives will be in $\pd$. Then each of the other $N-K$
items will fail to be in any of the $M_0$ negative tests with probability
$(1-p)^{M_0}$. Hence we have that, independent of $\vc{M}'$,  the conditional distribution
of $G$ given $M_0$ is
  \begin{equation} \label{eqS}
    G \mid M_0 \sim \text{Bin}(N-K,(1-p)^{M_0}) .
  \end{equation}
We will express the 
success probability as 
\begin{equation} \label{eq:succ}
\pr(\suc) = 
 \sum_{m_0 = 0}^{T} \sum_{g = 0}^{N-K} \binomi{T}{q_0}{m_0} \binomi{N-K}{(1-p)^{m_0}}{g} 
\pr( \suc \mid M_0 = m_0, G = g ). \end{equation}
Next, given $\vec M$ and $G$, we can write down the conditional
distribution of $\vec L$, since for $i=1, \ldots, K$,
  \begin{align}
    L_0 &= M_0, \notag \\
    L_i \mid M_i, G &\sim \text{Bin}(M_i, (1-p)^{G}), \label{eqL} \\
    L_+ &= T - \sum_{i=0}^K L_i .\notag 
  \end{align}
This is because for each $i$, a test which contains defective item $i$ and no other defective will
contribute to $M_i$. However, it will only contribute to $L_i$ if it also contains none
of the $G$ intruding non-defectives. The Bernoulli sampling of the matrix $\mat{X}$ means that each such test
will contribute to $L_i$ independently with probability $(1-p)^G$. Equivalently, the $L_i$ are independently thinned versions of $M_i$ (in the
sense of R\'{e}nyi \cite{renyi4}), with
thinning parameter $(1-p)^{G}$.

Repeatedly using Lemma \ref{lem:multifacts}.\ref{it:split}, we can
deduce that, conditional on $M_0 = m_0$ and $G=g$, we have that
\[ (L_1, L_2, \ldots, L_k, M_1-L_1, \ldots, M_k-L_k, M_+)
\sim \pr_{T-m_0, \vc{q}''}, \]
where
\[
\vc{q}'' = \Big( \frac{q_1 (1-p)^g}{1-q_0}, \ldots, 
\frac{q_1 (1-p)^g}{1-q_0}, 
\frac{q_1 (1-(1-p)^g)}{1-q_0}, \ldots, 
\frac{q_1(1-(1-p)^g)}{1-q_0},
\frac{q_+}{1-q_0} \Big).
\]
 From  (\ref{eqsuc}), we know that the \algdd
algorithm will be successful
precisely in the event $\bigcap_{i=1}^K \{ L_i \neq 0 \}$.
Lemma \ref{lem:maskprob} gives an exact expression for this probability as 
$$\pr( \suc | M_0 = m_0, G = g ) = \phi_K  \left( \frac{q_1 (1-p)^g}{1-q_0},
T-m_0 \right).$$
We can then directly substitute this in  (\ref{eq:succ}) to deduce the theorem.
\end{proof}

\begin{remark} \label{rem:chan2}
We can recover the bound (\ref{eq:comperror}) of Chan \etal \cite{chan2011}  using our techniques. As previously
remarked in Section \ref{sec:dd}, their \algcomp algorithm succeeds if and only if $G = 0$.
Using  (\ref{eqS}) we know that  $G \mid M_0 \sim \text{Bin}(N-K,(1-p)^{M_0})$.
This means that
\begin{align}
\pr( \suc) & = \sum_{m_0 = 0}^T \pr(M_0 = m_0) \pr(G = 0 |M_0 = m_0)  \notag \\
& =  \sum_{m_0=0}^T \binom{T}{m_0} q_0^{m_0} (1-q_0)^{T-m_0} \left(  (1- (1-p)^{m_0} )^{N-k}
\right) \label{eq:bracketedterm} \\
& \geq  \sum_{m_0=0}^T \binom{T}{m_0} q_0^{m_0} (1-q_0)^{T-m_0} \left( 1 - (N-k) (1-p)^{m_0} \right) \notag \\
& = 1-  (N-k) \left( q_0 (1-p) + 1-q_0) \right)^T. \notag
\end{align}
Here we bound the bracketed term (\ref{eq:bracketedterm}) using the Bernoulli inequality 
\begin{equation} \label{eq:bern} (1+x)^T \geq 1 + x T  \mbox{ for all
$x \geq -1$ and $T \geq 0$,} \end{equation} and the result follows since $q_0 = (1-p)^k$, so that
$q_0(1-p) + 1 - q_0 = 1 - p(1-p)^k$.
\end{remark}

\subsection{\algDD: bounds} \label{sec:ddprob2}

Theorem \ref{thm:lower} gives a complicated multipart expression that gives the success  probability of the 
\algdd algorithm. In fact, since $\Phi_K$ is defined in terms of a summation formula, 
the expression (\ref{eq:ddsuc}) is a triple sum, which is difficult to analyse and control.

	Notice that for the success probability we can use the bound $\phi_K( q, T-m_0) \geq \max\{0,1 - K( 1- q)^{T-m_0}\}$ (see Lemma \ref{lem:maskprob}) to reduce the equality \eqref{eq:succ} to a lower bound, expressed in terms of a double sum. 
In this subsection we derive a simpler bound on the success probability. We repeatedly use the fact that
\begin{equation} \label{eq:expbound}  (1-x)^y \leq \exp(-x y) \mbox{\;\;\; for $0 \leq x \leq 1$ and $y \geq 0$.} \end{equation}
In order to analyse the success probability of the \algdd algorithm, it is useful to bound $q_0$ and $q_1$. 
\begin{lemma} \label{lem:qbounds} Writing $p = 1/K$, and defining $q_0 = (1-p)^K$ and $q_1 = p(1-p)^{K-1}$,  we  notice that $q_1 = q_0/(K-1)$. Hence for
 $K \geq 2$ we deduce that:
\begin{align}
 \frac{K-2}{K-1} e^{-1} & \leq  q_0  \leq e^{-1}, \label{eq:q0bd} \\
\frac{q_1}{1-q_0} & \leq  \frac{1}{K}. \label{eq:q0q1bd}
\end{align}
\end{lemma}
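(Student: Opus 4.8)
The plan is to dispatch each of the three assertions in turn, all via elementary estimates on $(1-1/K)^K$; there is essentially nothing deep here, so the main work is just bookkeeping and choosing the right logarithmic inequalities.

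First I would verify the identity $q_1 = q_0/(K-1)$ by direct algebra. With $p = 1/K$ we have $q_0 = (1-1/K)^K = (1-1/K)^{K-1}(1-1/K) = \frac{K-1}{K}(1-1/K)^{K-1}$, so dividing by $K-1$ gives exactly $\frac1K(1-1/K)^{K-1} = q_1$. (This is the first point where $K \geq 2$ is used, so that $K-1 \neq 0$.) Next, for \eqref{eq:q0bd}: the upper bound $q_0 \leq e^{-1}$ is immediate from \eqref{eq:expbound}, i.e. $(1-x)^y \leq e^{-xy}$, taken with $x = 1/K$ and $y = K$. For the lower bound I would use the complementary estimate $-\ln(1-x) = \ln\!\bigl(1 + \tfrac{x}{1-x}\bigr) \leq \tfrac{x}{1-x}$ for $x \in [0,1)$ (just $\ln(1+u) \leq u$ with $u = x/(1-x) \geq 0$); with $x = 1/K$ this reads $-\ln(1-1/K) \leq \tfrac{1}{K-1}$, hence $K\ln(1-1/K) \geq -\tfrac{K}{K-1} = -1 - \tfrac{1}{K-1}$, so that $q_0 \geq e^{-1} e^{-1/(K-1)}$. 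Applying $e^{-y} \geq 1-y$ with $y = 1/(K-1)$ then gives $q_0 \geq e^{-1}\cdot\frac{K-2}{K-1}$, as claimed.

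Finally, for \eqref{eq:q0q1bd}: substituting the identity $q_1 = q_0/(K-1)$, the claimed inequality $\frac{q_1}{1-q_0} \leq \frac1K$ is equivalent to $\frac{K q_0}{K-1} \leq 1 - q_0$, i.e. to $(2K-1)q_0 \leq K-1$, i.e. to $q_0 \leq \frac{K-1}{2K-1}$. For $K \geq 3$ the right-hand side is at least $\frac{2}{5}$, since $\frac{K-1}{2K-1}$ is increasing in $K$ with value $\frac{2}{5}$ at $K = 3$, and $\frac25 > e^{-1} \geq q_0$ by \eqref{eq:q0bd}; the remaining case $K = 2$ is checked by hand, where $q_0 = \frac14 \leq \frac13 = \frac{K-1}{2K-1}$.

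The only point needing a moment's care is this last inequality: the crude bound $q_0 \leq e^{-1}$ is not quite small enough to beat $\frac{K-1}{2K-1}$ when $K = 2$, so that single value has to be treated separately. Beyond that I anticipate no obstacle.
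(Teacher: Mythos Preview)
Your proof is correct. The identity and the two-sided bound \eqref{eq:q0bd} are handled essentially as in the paper (the paper phrases the lower bound via \eqref{eq:expbound} with $x = -p/(1-p)$, but this is equivalent to your logarithmic inequality).

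The one genuine difference is in \eqref{eq:q0q1bd}. You reduce it algebraically to $q_0 \leq \frac{K-1}{2K-1}$ and then split into the cases $K=2$ and $K\geq 3$. The paper instead observes that $q_0 + Kq_1 \leq 1$ holds automatically, because $q_+ = 1 - q_0 - Kq_1$ is a probability and hence nonnegative; rearranging $Kq_1 \leq 1 - q_0$ gives \eqref{eq:q0q1bd} in one line, with no case analysis and no appeal to \eqref{eq:q0bd}. Your route works, but the paper's probabilistic observation is shorter and avoids the slightly awkward separate treatment of $K=2$.
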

\begin{proof}
The upper bound of (\ref{eq:q0bd}) follows by taking $x = p$ and $y= K$ in (\ref{eq:expbound}) above. The lower bound is slightly more involved; taking $x= -p/(1-p)$ and $y=K$,
we deduce from (\ref{eq:expbound}) that $(1 - p)^K \geq \exp(-1 - \frac{1}{K-1})$. Further, taking
$x =p/(1-p)$ and $y=1$ in (\ref{eq:expbound}) tells us that $\exp(-\frac{1}{K-1}) \geq \frac{K-2}{K-1}$, and the result follows.

Further, Equation (\ref{eq:q0q1bd}) follows  on rearranging the fact that $q_0 + K q_1 \leq 1$.
\end{proof}

We bound $\Phi_K$, using arguments based on  the Bernoulli inequality (as previously used in Remark \ref{rem:chan2}).
\begin{lemma}
\label{lem:bound}
For fixed $T$, $K$ and $n_0$,
 taking
  \[ q^*(g):=\frac{q_1 (1-p)^g}{1-q_0}=\frac{p(1-p)^{K+g-1}}{1-(1-p)^K} , \]
the function 
\begin{eqnarray}
 \Phi_K(g,m_0) & = &  \phi_K( q^*(g), T-m_0) \nonumber \\
& \geq & \max \left\{0, 1- K \exp \left( - \frac{ q_1(T-m_0)}{1-q_0} \right) \exp
\left( \frac{ p q_1(T-m_0)}{1-q_0} g \right) \right\}. \label{eq:errbound}
\end{eqnarray}
\end{lemma}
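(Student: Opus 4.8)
The plan is to start from the definition \eqref{eq:phi} of $\phi_K$ and apply the inclusion--exclusion/Bonferroni-type bound that is already recorded in Lemma~\ref{lem:maskprob}, namely $\phi_K(q,S) \geq \max\{0, 1 - K(1-q)^S\}$. Applying this with $q = q^*(g)$ and $S = T - m_0$ gives immediately
\[ \Phi_K(g,m_0) = \phi_K\big(q^*(g), T-m_0\big) \geq \max\Big\{0,\, 1 - K\big(1 - q^*(g)\big)^{T-m_0}\Big\}. \]
So the substance of the lemma is to show that $\big(1 - q^*(g)\big)^{T-m_0}$ is bounded above by the product of exponentials appearing in \eqref{eq:errbound}. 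When the right-hand side of \eqref{eq:errbound} is negative the claim is trivial (the $\max$ with $0$ handles it), so I only need to treat the regime where the bound is nontrivial.

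First I would use \eqref{eq:expbound}, i.e. $(1-x)^y \leq \exp(-xy)$, with $x = q^*(g)$ and $y = T - m_0$, to get $\big(1 - q^*(g)\big)^{T-m_0} \leq \exp\big(-q^*(g)(T-m_0)\big)$. Next I substitute $q^*(g) = q_1(1-p)^g/(1-q_0)$, so the exponent becomes $-\dfrac{q_1(T-m_0)}{1-q_0}(1-p)^g$. The remaining task is to peel off the $(1-p)^g$ factor: I want to replace $(1-p)^g$ by something of the form $1 - pg$ (or a suitable linear lower bound) so that the single exponential splits into the product $\exp\big(-\tfrac{q_1(T-m_0)}{1-q_0}\big)\exp\big(\tfrac{pq_1(T-m_0)}{1-q_0}g\big)$. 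This is exactly the Bernoulli inequality \eqref{eq:bern} in the form $(1-p)^g \geq 1 - pg$, valid since $p \in [0,1]$ and $g \geq 0$; combined with monotonicity of $x \mapsto \exp(-cx)$ for $c \geq 0$ (here $c = \tfrac{q_1(T-m_0)}{1-q_0} \geq 0$), this yields
\[ \exp\Big(-\tfrac{q_1(T-m_0)}{1-q_0}(1-p)^g\Big) \leq \exp\Big(-\tfrac{q_1(T-m_0)}{1-q_0}(1-pg)\Big) = \exp\Big(-\tfrac{q_1(T-m_0)}{1-q_0}\Big)\exp\Big(\tfrac{pq_1(T-m_0)}{1-q_0}\,g\Big), \]
and multiplying through by $K$ and subtracting from $1$ gives \eqref{eq:errbound}.

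The main thing to be careful about is the direction of the inequalities and the sign bookkeeping: we are lower-bounding a success probability, which means upper-bounding $K(1-q^*(g))^{T-m_0}$, so every intermediate step must go the "right" way, and in particular the use of the Bernoulli inequality $(1-p)^g \geq 1-pg$ must feed into a \emph{decreasing} function to produce an upper bound — this is the only slightly delicate point, and it works because the coefficient $\tfrac{q_1(T-m_0)}{1-q_0}$ is nonnegative. I would also note explicitly that $1 - pg$ may be negative (when $g > K$), but this is harmless: either $\exp\big(-\tfrac{q_1(T-m_0)}{1-q_0}(1-pg)\big)$ still dominates (by the same monotonicity argument, now with $(1-p)^g \geq 0 > 1-pg$ reversed — actually here one should just observe $(1-p)^g \geq 0$ and hence the displayed chain still holds since $1-pg \le (1-p)^g$ regardless of sign), or the whole bound is absorbed by the $\max\{0,\cdot\}$. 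A clean way to present this is to do the algebra assuming the bound is nonnegative and remark that the $\max$ with $0$ covers the complementary case, exactly as in Lemma~\ref{lem:maskprob}. No separate obstacle beyond this sign care; the proof is a short chain of \eqref{eq:expbound}, \eqref{eq:bern}, and Lemma~\ref{lem:maskprob}.
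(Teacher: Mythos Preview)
Your proposal is correct and follows essentially the same route as the paper: apply the Bonferroni bound from Lemma~\ref{lem:maskprob}, then use $(1-x)^y \le \exp(-xy)$ from \eqref{eq:expbound}, and finally the Bernoulli inequality \eqref{eq:bern} in the form $(1-p)^g \ge 1-pg$ to linearise the exponent. Your extra remark about the sign of $1-pg$ is unnecessary since Bernoulli's inequality $(1-p)^g \ge 1-pg$ holds for all $g\ge 0$ and $0\le p\le 1$ regardless of that sign, so the monotonicity step goes through without case analysis.
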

\begin{proof} First observe that
as in Lemma \ref{lem:maskprob}  we can write
\begin{equation} \label{eq:PhiK} 1- \Phi_K(g,m_0) \leq K( 1 - q^*(g) )^{T-m_0} \leq K \exp \left( - (T-m_0) q^*(g) \right),\end{equation}
since (\ref{eq:expbound}) gives that
$(1-x)^y \leq \exp(-x y)$ for $0 \leq x \leq 1$ and $y \geq 0$.
Further, the Bernoulli inequality (\ref{eq:bern}), giving $(1+x)^T \geq 1 + x T$, means that
$$ q^*(g) = \frac{q_1 (1-p)^g}{1-q_0} \geq \frac{ q_1(1- p g)}{1-q_0},$$
and substituting this  in (\ref{eq:PhiK}), the result follows.
\end{proof}

We now prove a theorem to bound the success probability of the \algdd algorithm, by
exploiting the favourable geometry of the distributions of $(M_0, G)$ and $\Phi_K$, the
probability 
that no defectives are masked. The strategy is that since $M_0$, the number of tests containing no defectives, is concentrated around
its mean $T q_0$, then bounding $\Phi_K$ will give a bound on the overall success probability, by controlling the inner sum in Theorem \ref{thm:lower}.
\begin{lemma} \label{lem:innersum} For any given $m_0$ we can bound the inner sum in Theorem \ref{thm:lower} by
\begin{equation}
\sum_{g = 0}^{N-K} \binomi{N-K} {(1-p)^{m_0}}{g} \Phi_K(g,m_0)
\geq  \max \left\{ 0, 1- K \exp \Theta(T,m_0) \right\},
\end{equation}
where we write
\begin{equation} \label{eq:Thetadef} \Theta(T,m_0) = N (1-p)^{m_0}  \left( \exp
\left( \frac{(T-m_0) p q_1}{1-q_0} \right) - 1 \right) - \frac{q_1 (T-m_0)}{1-q_0}. \end{equation}
Hence, given a {\rm Bernoulli}$(p)$ test design, the probability of success under the 
\algdd algorithm satisfies
\begin{equation} \label{eq:ddsucbd}
\pr(\suc) \geq \sum_{m_0 = 0}^{T}  \binomi{T}{q_0}{m_0} 
\max \left[ 0, 1- K 
\exp \left( \Theta(T,m_0) \right) \right]. \end{equation}
\end{lemma}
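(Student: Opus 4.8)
The plan is to feed the pointwise lower bound on $\Phi_K$ from Lemma~\ref{lem:bound} into the inner ($g$-)sum of Theorem~\ref{thm:lower}, recognise what is left as a multiple of the moment generating function of a binomial random variable, and bound that in closed form.

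First I would record that the estimate of Lemma~\ref{lem:bound},
\[
  \Phi_K(g,m_0) \;\geq\; 1 - K \exp\!\left( -\frac{q_1(T-m_0)}{1-q_0} \right)\exp\!\left( \frac{p\,q_1 (T-m_0)}{1-q_0}\, g \right),
\]
in fact holds for every $g$ without the truncation at $0$: when the right-hand side is negative the inequality is immediate because $\Phi_K(g,m_0)\geq 0$ (it is a probability), and when it is nonnegative it is exactly Lemma~\ref{lem:bound}. Summing against the masses $\binomi{N-K}{(1-p)^{m_0}}{g}$, the constant $1$ contributes $1$ since these masses sum to one, and what remains is
\[
  K \exp\!\left( -\frac{q_1(T-m_0)}{1-q_0} \right) \sum_{g=0}^{N-K}\binomi{N-K}{(1-p)^{m_0}}{g}\, e^{c g}, \qquad c := \frac{p\,q_1 (T-m_0)}{1-q_0}\geq 0 .
\]

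Next I would observe that the remaining sum is precisely $\E\!\left[e^{cG}\right]$ for $G\sim\bino\!\big(N-K,(1-p)^{m_0}\big)$, so it equals $\big(1+(1-p)^{m_0}(e^{c}-1)\big)^{N-K}$, and using $(1+x)^{n}\leq e^{nx}$ for $x\geq 0$ together with $N-K\leq N$ this is at most $\exp\!\big(N(1-p)^{m_0}(e^{c}-1)\big)$. Substituting $c$ and collecting exponents gives $1-K\exp\!\big(N(1-p)^{m_0}(e^{c}-1)-q_1(T-m_0)/(1-q_0)\big)=1-K\exp\Theta(T,m_0)$ with $\Theta$ as in \eqref{eq:Thetadef}; since the inner sum is also trivially nonnegative, the truncation at $0$ may be reinstated, which gives the first display of the lemma. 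Finally, plugging this bound for the inner sum into \eqref{eq:ddsuc} and retaining the outer sum over $m_0$ against $\binomi{T}{q_0}{m_0}$ yields \eqref{eq:ddsucbd}.

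The argument is essentially routine: the only points needing a little care are the case-split that licenses summing the un-truncated Lemma~\ref{lem:bound} bound term-by-term, and the bookkeeping that the coefficient $c=p\,q_1(T-m_0)/(1-q_0)$ appearing in the binomial moment generating function is exactly the one that makes $N(1-p)^{m_0}(e^{c}-1)-q_1(T-m_0)/(1-q_0)$ coincide with $\Theta(T,m_0)$.
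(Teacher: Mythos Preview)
Your proposal is correct and follows essentially the same route as the paper: apply the pointwise bound of Lemma~\ref{lem:bound}, evaluate the resulting $g$-sum in closed form (you phrase this as the binomial moment generating function, the paper invokes the binomial theorem directly---these are the same computation), and then apply $(1+x)^n\le e^{nx}$. Your write-up is in fact slightly more careful than the paper's in two places: you make explicit why the truncation in Lemma~\ref{lem:bound} can be dropped before summing and reinstated afterwards, and you note the step $N-K\le N$ that converts the exponent obtained in the calculation into the $\Theta(T,m_0)$ of the statement.
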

\begin{proof}
Using Lemma \ref{lem:bound}
we can simply bound the left-hand side by writing $\ol{p} = (1-p)^{m_0}$ and using the binomial theorem:
\begin{align}
&\sum_{g = 0}^{N-K} \binom{N-K}{g} \ol{p}^g (1-\ol{p})^{N-K-g} \left[
 K \exp \left( - \frac{ q_1(T-m_0)}{1-q_0} \right) \exp
\left( \frac{(T-m_0) g p q_1}{1-q_0} \right) \right] \notag \\
&\qquad =  K \exp \left( - \frac{ q_1(T-m_0)}{1-q_0} \right)
\sum_{g = 0}^{N-K} \binom{N-K}{g} \left( \ol{p} \exp \left( \frac{(T-m_0) p q_1}{1-q_0} \right) \right)^g (1-\ol{p})^{N-K-g} \notag \\
&\qquad =  K \exp \left( - \frac{ q_1(T-m_0)}{1-q_0} \right)
\left( \ol{p} \exp \left( \frac{(T-m_0) p q_1}{1-q_0} \right) + 1-\ol{p} \right)^{N-k} \notag \\
&\qquad \leq K \exp 
\left( (N-K) (1-p)^{m_0} \left( \exp \left( \frac{(T-m_0) p q_1}{1-q_0} \right) - 1 \right) - \frac{ q_1(T-m_0)}{1-q_0} \right). \label{eq:term}
\end{align}
where the final inequality follows using $(1+x)^y \leq \exp(x y)$ for positive $x$ and $y$.
\end{proof}

\subsection{\algSSS: lower bound} \label{sec:sssprob}

We have the following lower bound on the success probability of the \algsss algorithm.
\begin{theorem}\label{thmsss}
  For noiseless group testing with a Bernoulli($p$) test design, the success probability of  the \algsss
  algorithm is bounded by
  \begin{equation} \label{eq:sssbound}
     \pr(\suc) \geq 1 - K \big(1 - Q(K,K-1,K-1)\big)^T
                      - \sum_{B=0}^{K-1} \binom{K}{B} \binom{N-K}{K-B} \big(1 - Q(K,K,B)\big)^T ,
  \end{equation}
  where we write
    \begin{equation} \label{Q}
      Q(K,L,B) = (1-p)^K + (1-p)^L - 2(1-p)^{K+L-B} .
    \end{equation}
\end{theorem}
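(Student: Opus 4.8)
The plan is to control $\pr(\err) = 1 - \pr(\suc)$ for \algSSS\ by a union bound over the ``bad'' satisfying sets, after first cutting that collection down to two simple families. Since \algSSS\ returns a smallest satisfying set and $\K$ is always satisfying, the algorithm can fail only if there is a satisfying set $\L \neq \K$ with $|\L| \leq K$; call this event $E$, so $\pr(\err) \leq \pr(E)$. The central step is to show $E \subseteq E_1 \cup E_2$, where $E_1$ is the event that \emph{some satisfying set of size exactly $K$ other than $\K$ exists}, and $E_2$ is the event that \emph{some defective $i \in \K$ is masked by $\K \setminus \{i\}$}, i.e.\ that $\K \setminus \{i\}$ is itself a satisfying set.

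To establish this reduction I would use three elementary facts: any satisfying set is contained in the set $\pd$ of possible defectives (an item appearing in a negative test cannot belong to a satisfying set), $\K \subseteq \pd$, and $|\pd| \geq K$; and if $\L$ is satisfying and $i \in \pd$ then $\L \cup \{i\}$ is again satisfying (adjoining $i$ alters no negative-test coordinate and cannot turn a satisfied positive coordinate off). Now take a witness $\L$ for $E$. If $\L \not\subseteq \K$, fix an element $j \in \L \setminus \K$ and repeatedly adjoin elements of $\pd \setminus \L$ — always possible while $|\L| < K$ since $|\pd \setminus \L| \geq K - |\L| > 0$ — until $|\L| = K$; the resulting set is satisfying, has size $K$, and still contains $j \notin \K$, so $E_1$ holds. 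If instead $\L \subseteq \K$, then $|\L| < K$ because $\L \neq \K$, and adjoining elements of $\K \setminus \L \subseteq \pd \setminus \L$ until the set has size $K - 1$ produces a size-$(K-1)$ satisfying subset of $\K$, i.e.\ $E_2$.

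It then remains to bound $\pr(E_1)$ and $\pr(E_2)$ by union bounds, for which the only computation needed is the probability that a fixed set $\L$ is satisfying. Writing $L = |\L|$ and $B = |\L \cap \K|$, split $\K$ and $\L$ into the disjoint blocks $\K \setminus \L$, $\K \cap \L$, $\L \setminus \K$ of sizes $K-B$, $B$, $L-B$; a single Bernoulli$(p)$ test gives the same outcome under $\K$ and under $\L$ unless exactly one of the two sets hits it, and the two ``exactly one hits'' probabilities are $(1-(1-p)^{K-B})(1-p)^{L}$ and $(1-p)^{K}(1-(1-p)^{L-B})$, summing to exactly $Q = Q(K,L,B)$ of \eqref{Q}; by independence over the $T$ tests, $\pr(\L \text{ satisfying}) = (1-Q)^T$. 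For $E_1$, grouping the size-$K$ sets $\L \neq \K$ by $B \in \{0,\dots,K-1\}$ gives $\binom{K}{B}\binom{N-K}{K-B}$ sets of discrepancy parameter $Q(K,K,B)$; for $E_2$ the $K$ sets $\K \setminus \{i\}$ all have $L = B = K-1$ and discrepancy parameter $Q(K,K-1,K-1)$. Adding the two union bounds and subtracting from $1$ yields the claimed inequality. I expect the only real subtlety to be the reduction step — in particular verifying that the growing procedure in the $\L \not\subseteq \K$ case never recreates $\K$ and that $\pd$ always contains enough spare elements to carry it out — while the rest is a routine independent-Bernoulli calculation.
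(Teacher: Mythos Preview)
Your proposal is correct and follows essentially the same approach as the paper: the paper likewise reduces the union over all $\L$ with $|\L|\leq K$, $\L\neq\K$, to the two families $L=B=K-1$ and $L=K$ by observing that each event $A(\K,\L)$ is contained in an event of one of these two types (phrased there as ``adding extra defective items'' to $\L$), and then applies the same single-test discrepancy calculation and union bound. The only cosmetic difference is that in the $\L\not\subseteq\K$ case the paper grows $\L$ using elements of $\K$ rather than of $\pd$, but both choices work for the same reason.
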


The final term in \eqref{eq:sssbound}, which corresponds to the error probability
when $K$ is known, has previously been analysed by Seb\H{o} \cite{Sebo198523} in the fixed $K$ regime
(equivalent to our $\beta = 1$).

To prove Theorem \ref{thmsss} we will require the following lemma.

\begin{lemma} \label{lem}
  The probability that a single Bernoulli($p$) test $\vec x$ gives a
  different outcome depending on whether the defective set is $\K$ or
  $\L$ is $Q(|\K|,|\L|,|\K\cap\L|)$, where $Q$ is as in \eqref{Q}.
\end{lemma}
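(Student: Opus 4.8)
The plan is to compute directly the probability that a single Bernoulli$(p)$ test $\vec x$ distinguishes the defective sets $\K$ and $\L$. A test $\vec x$ gives a positive outcome for defective set $\K$ precisely when $\vec x$ intersects $\K$, i.e. when at least one item of $\K$ is included in the test; likewise for $\L$. So the outcomes differ exactly on the symmetric-difference-type event
\[
  \bigl( \{\vec x \cap \K \neq \emptyset\} \cap \{\vec x \cap \L = \emptyset\} \bigr)
  \;\cup\;
  \bigl( \{\vec x \cap \K = \emptyset\} \cap \{\vec x \cap \L \neq \emptyset\} \bigr),
\]
and these two events are disjoint, so the probability is the sum of the two.

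First I would note that, since each item is included independently with probability $p$, the event $\{\vec x \cap \mathcal A = \emptyset\}$ has probability $(1-p)^{|\mathcal A|}$ for any set $\mathcal A$. Applying this with $\mathcal A = \K$, $\mathcal A = \L$, and $\mathcal A = \K \cup \L$ (which has size $|\K| + |\L| - |\K \cap \L|$), inclusion--exclusion gives
\[
  \pr(\vec x \cap \K = \emptyset,\ \vec x \cap \L \neq \emptyset)
  = \pr(\vec x \cap \K = \emptyset) - \pr(\vec x \cap (\K \cup \L) = \emptyset)
  = (1-p)^{|\K|} - (1-p)^{|\K| + |\L| - |\K\cap\L|},
\]
and symmetrically with the roles of $\K$ and $\L$ swapped. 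Adding the two contributions yields
\[
  (1-p)^{|\K|} + (1-p)^{|\L|} - 2(1-p)^{|\K| + |\L| - |\K\cap\L|} = Q(|\K|, |\L|, |\K\cap\L|),
\]
as claimed.

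This argument is essentially a one-line inclusion--exclusion once the right event is identified, so there is no real obstacle; the only thing to be careful about is recognising that the two ways the outcomes can disagree are mutually exclusive (so probabilities add rather than requiring a union bound), and that $|\K \cup \L| = |\K| + |\L| - |\K \cap \L|$ is exactly the exponent appearing in the definition \eqref{Q} of $Q$. I would write it up in the two-case form above for clarity.
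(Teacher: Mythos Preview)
Your proof is correct and essentially the same as the paper's: both split into the two disjoint cases (negative for $\K$, positive for $\L$ and vice versa) and compute each probability directly. The only cosmetic difference is that the paper factors each piece via independence on disjoint item sets (e.g.\ $(1-p)^{K}\bigl(1-(1-p)^{L-B}\bigr)$), whereas you obtain the same expression by subtracting $\pr(\vec x \cap (\K\cup\L)=\emptyset)$ from $\pr(\vec x \cap \K=\emptyset)$; the arithmetic is identical.
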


\begin{proof}
  Write $K = |\K|$, $L = |\L|$, and $B = |\K\cap\L|$ for respectively the number of
  items in $\K$, in $\L$, and in both $\K$ and $\L$. Also write $q = 1-p$.

  The test gives a negative outcome with defective set $\K$
  but a positive outcome with defective set $\L$ if and only
  if no items of $\K$ are included in the test, but at least one item
  $\L \setminus \K$ is included.  This occurs with probability
    \[ q^K(1-q^{L-B}) = q^K - q^{K+L-B} . \]
  Similarly, the test gives a positive outcome with defective set $\K$
  but a negative outcome with defective set $\L$ with probability
    \[ q^L(1-q^{K-B}) = q^L - q^{K+L-B} . \]

  Adding together the probabilities of these disjoint events gives
  the result.
\end{proof}

We can now prove Theorem \ref{thmsss}.

\begin{proof}
The \algsss algorithm may make an error if the true defective set
$\K$ is not the unique smallest satisfying set.  Thus the error probability of \algsss
is
  \[ \epsilon \leq \pr \Bigg( \bigcup_{\substack{|\L| \leq K\\ \L \neq \K}}
            A(\L,\K) \Bigg) , \]
where $A(\L, \K)$ denotes the event that the sets $\L$ and $\K$ would give identical
outcomes for all $T$ tests.

Consider a set $\L$ containing $|\L| = L$ items, where there are $B = |\K \cap \L|$ items
in both $\L$ and $\K$. By Lemma \ref{lem} and the fact that tests are independent, we have that
  \[ \pr \big( A(\L, \K) \big) = \big(1 - Q(K,L,B)\big)^T \]

At this stage we can get a simple bound by using the union bound to write
  \begin{align}
    \epsilon &\leq \sum_{\substack{|\L| \leq K\\ \L \neq \K}} \pr \big( A(\L, \K) \big) \notag \\
             &= \sum_{\substack{|\L| \leq K\\ \L \neq \K}} \big(1 - Q(K,|\L|,|\K \cap \L|)\big)^T \notag \\
             &= \sum_{L=0}^{K} \sum_{B=0}^{L} \binom{K}{B} \binom{N-K}{L-B} \label{union}
\big(1 - Q(K,L,B)\big)^T - 1,
  \end{align}
where the $-1$ is because our sum includes a term for $L = B = K$, corresponding to the
true defective set.

However, we can get a tighter bound by noting that many of the events $A(\K,\L)$ are subsets
of other events of the same type. First, given a satisfying set $\L$ with $B = L < K-1$ -- that is, with no false
positives and at least two false negatives -- the event $A(\K,\L) \subset A(\K, \L')$ with
$L' = B' = K - 1$, where $\L'$ is the set $\L$ with extra defective items added.
Second, given a satisfying set $\L$ with $B < L < K$ -- that is, with at least one false positive and at least
one false negative -- the event $A(\K,\L) \subset A(\K, \L'')$ with $L'' = K$, where again
$\L''$ is the set $\L$ with extra defective items added.

Considering only the terms in \eqref{union} with $L = B = K-1$ and the terms
with $L = K$ gives  the tighter bound
  \[ \epsilon \leq K \big(1 - Q(K,K-1,K-1)\big)^T
                     + \sum_{B=0}^{K-1} \binom{K}{B} \binom{N-K}{K-B} \big(1 - Q(K,K,B)\big)^T \]
as desired.
\end{proof}

\subsection{\algSSS: upper bound} \label{sec:sssprob2}

Next, in Theorem \ref{prop:sample},
we give an upper bound on the
success probability of the \algsss algorithm. As  discussed in Section \ref{sec:sss}, this algorithm can be 
viewed as an idealized benchmark for the performance of any algorithm (when the number of
defectives is unknown), so this bound should control the success probability of any algorithm.
\begin{theorem} \label{prop:sample}
For any $K$, if  we sample $\mat X$ according to a Bernoulli($p$) test design, then for any $p$,
the \algsss algorithm has success probability bounded above by
\begin{equation} \pr( \suc) \leq \phi_K \left(  \frac{1}{e(K-1)} , T \right), \label{eq:bdsample} \end{equation}
\end{theorem}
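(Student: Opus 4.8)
The plan is: extract a clean necessary condition for \algsss\ to succeed, evaluate its probability exactly through the multinomial vector $\vec M$ of Lemma~\ref{lem:sampling}, and then optimise the resulting bound over the design parameter $p$.

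The first and key step is the observation that \algsss\ can return the defective set only if $\K$ is itself a \emph{smallest} satisfying set. Suppose $M_i = 0$ for some $i \in \K$, i.e.\ every test containing defective $i$ also contains some other defective. Then deleting $i$ changes no outcome, so $\K \setminus \{i\}$ is feasible in \eqref{ilp} with objective value $K-1 < K$; hence the optimum is at most $K-1$ and \algsss\ cannot output $\K$. Taking contrapositives,
\[ \pr(\suc) \le \pr\bigl( M_1 \ne 0, \dots, M_K \ne 0 \bigr). \]
By Lemma~\ref{lem:sampling}, $\vec M$ is multinomial with the $K$ cells $M_1, \dots, M_K$ each of probability $q_1 = p(1-p)^{K-1}$, which is exactly the symmetric configuration treated in Lemma~\ref{lem:maskprob}; that lemma therefore gives $\pr(M_1 \ne 0, \dots, M_K \ne 0) = \phi_K(q_1, T)$.

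It remains to remove the $p$-dependence. A one-line calculus check shows $q_1 = p(1-p)^{K-1}$ is maximised at $p = 1/K$, and since $(1-1/K)^K \le e^{-1}$ this yields $q_1 \le \tfrac1{e(K-1)}$ for every $p$; for $K \ge 2$ both $q_1$ and $\tfrac1{e(K-1)}$ lie in $[0, 1/K]$. Finally, $q \mapsto \phi_K(q, T)$ is non-decreasing on $[0, 1/K]$: this is intuitively clear from the interpretation of $\phi_K$ as the probability that all $K$ symmetric cells of the multinomial are occupied (raising $q$ moves mass off the residual cell), and can be made rigorous either by a coupling argument or by the identity $\partial_q \phi_K(q, T) = TK(1-q)^{T-1}\,\phi_{K-1}\bigl(\tfrac{q}{1-q}, T-1\bigr) \ge 0$, valid for $q \le 1/K$ since then $\tfrac{q}{1-q} \le \tfrac1{K-1}$. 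Chaining the inequalities gives $\pr(\suc) \le \phi_K(q_1, T) \le \phi_K\bigl(\tfrac1{e(K-1)}, T\bigr)$, which is \eqref{eq:bdsample}.

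The only mildly delicate point is the monotonicity of $\phi_K(\cdot, T)$: because \eqref{eq:phi} is an alternating sum it is not term-by-term monotone, so one genuinely needs the probabilistic coupling or the derivative identity above. Everything else — the reduction to the event $\{M_i \ne 0 \text{ for all } i\}$, the multinomial evaluation via Lemmas~\ref{lem:sampling} and~\ref{lem:maskprob}, and the maximisation of $q_1$ over $p$ — is routine.
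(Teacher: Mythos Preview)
Your proof is correct and follows essentially the same route as the paper: the necessary condition $\{M_i\neq 0\text{ for all }i\}$, the evaluation via Lemmas~\ref{lem:sampling} and~\ref{lem:maskprob} giving $\phi_K(q_1,T)$, the maximisation of $q_1$ at $p=1/K$ with the bound $q_1\le 1/(e(K-1))$, and the monotonicity of $\phi_K(\cdot,T)$ (your derivative identity is exactly the paper's Lemma~\ref{lem:nonincr}). The only cosmetic difference is that you supply a bit more justification for the monotonicity step than the paper does.
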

\begin{proof}
The key is to observe that if one of the defective items is masked by the other $K-1$ defectives,
then the \algsss algorithm will not succeed, since the $K-1$ items in question form a smaller
satisfying set. 

Equivalently, the set of matrices for which \algsss succeeds is a subset of the matrices 
for which $M_i \neq 0$ for each defective $i$.
This means that for any $p$, we can write
\begin{eqnarray*}
\pr( \suc) & \leq & \pr \left( \bigcap_{i=1}^K \{ M_i \neq 0 \} \right) 
 = \phi_K( p(1-p)^{K-1}, T),
\end{eqnarray*}
where the equality follows from Lemma \ref{lem:maskprob} below.
Now, Lemma \ref{lem:nonincr} below shows that $\phi_K(q, T)$ is increasing in $q$.
Observe that since $p(1-p)^{K-1}$ is maximised at $p = 1/K$,  for any $p$, (\ref{eq:q0bd}) means
we can write $p(1-p)^{K-1} \leq  \frac{1}{K-1} e^{-1}$.
\end{proof}

It is interesting to note that the upper bound of Theorem \ref{prop:sample} is complementary
to the universal upper bound of Theorem \ref{thm:basic}. In particular, note that  (\ref{eq:bdsample})
does not depend on $N$, but only $K$. This means that which bound is tighter for a 
particular $K$ will depend on the overall sparsity of the problem.

\section{Proofs: bounds on achievable rates} \label{sec:capacity}

\subsection{A lemma for rate calculations}

In order to carry out rate calculations, it is useful to have the following limit for normalized binomial
coefficients:

\begin{lemma} \label{lem:binom}
If $K = N^{1-\beta}$ then we can write
\begin{equation} \label{eq:binomasym}
\lim_{N \rightarrow \infty} \frac{ \log_2 \binom{N}{K}}{K \ln N} = \frac{\beta}{\ln 2}.
\end{equation}
\end{lemma}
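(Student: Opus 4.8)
The plan is to use Stirling's approximation to control $\log_2\binom{N}{K}$ and then carefully track which terms survive after dividing by $K\ln N$ and sending $N\to\infty$. First I would recall the standard bound
\[ \left(\frac{N}{K}\right)^K \leq \binom{N}{K} \leq \left(\frac{eN}{K}\right)^K , \]
valid for $1 \leq K \leq N$, which immediately gives
\[ K\log_2\!\left(\frac{N}{K}\right) \leq \log_2\binom{N}{K} \leq K\log_2\!\left(\frac{N}{K}\right) + K\log_2 e . \]
Dividing through by $K\ln N$ reduces the problem to showing that $\log_2(N/K)/\ln N \to \beta/\ln 2$, since the extra $K\log_2 e$ term contributes $\log_2 e/\ln N \to 0$.

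The remaining step is then purely a substitution: with $K = N^{1-\beta}$ we have $N/K = N^{\beta}$, so $\log_2(N/K) = \beta\log_2 N = \beta\ln N/\ln 2$, and hence $\log_2(N/K)/(K\ln N)\cdot K = \beta/\ln 2$ exactly, independent of $N$. Combining this with the two-sided bound and the vanishing of the $\log_2 e/\ln N$ correction, the squeeze theorem yields the claimed limit. One could alternatively invoke the sharper Stirling estimate $\log_2\binom{N}{K} = K\log_2(N/K) + K\log_2 e - \tfrac12\log_2 K + O(K^2/N) + \cdots$, but this is unnecessary overkill; the crude factor-of-$e$ bounds already pin down the leading $K\ln N$ asymptotics.

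I do not expect any real obstacle here — the result is essentially a bookkeeping exercise. The only mild subtlety is making sure that $K = N^{1-\beta}$ is treated as a (rounded) integer so that the binomial coefficient makes sense, and checking that the rounding does not affect the limit: replacing $K$ by $\lfloor N^{1-\beta}\rfloor$ changes $\log_2(N/K)$ by at most $O(1/K) \to 0$ and changes the normalizing $K\ln N$ only by a $1 + o(1)$ factor, so the limit is unaffected. I would state this rounding caveat in one sentence and otherwise present the squeeze argument directly.
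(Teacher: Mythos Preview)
Your proposal is correct and follows essentially the same approach as the paper: the paper uses exactly the two-sided bound $(N/K)^K \leq \binom{N}{K} \leq (eN/K)^K$, takes $\log_2$, divides by $K\ln N$, substitutes $N/K = N^\beta$, and lets $N\to\infty$. Your treatment of the rounding issue is a small bonus the paper omits.
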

\begin{proof} Well-known bounds on the binomial coefficients (see for example \cite[Page 1097]{leiserson}) state that for any $K$, we have
\begin{equation}
\left( \frac{N}{K} \right)^K \leq \binom{N}{K} \leq \left( \frac{N e}{K} \right)^K.
\end{equation}
Taking logarithms to base 2 and dividing by $K \ln N$, using the fact that $N/K = N^\beta$, we obtain
$$
\frac{ K \beta \log_2 N}{K \ln N} \leq  \frac{ \log_2 \binom{N}{K}}{K \ln N}
\leq \frac{ K \beta \log_2 N}{K \ln N} + \frac{\log_2 e}{\ln N},
$$ 
and the result follows on sending $N \to \infty$.
\end{proof}

\subsection{\algCOMP} \label{ap:comprate}

Our new results can be contrasted with the following lower bound of Chan \emph{et al.} \cite[Theorem 4]{chan2011},
which follows by rearranging the bound on success probability in Theorem \ref{chanboundthm}:
\begin{corollary} \label{cor:chanbd}
For any $\delta > 0 $, using $T = e (1+ \delta) K \ln N$ tests  ensures that \algcomp has
$$ \pr(\suc)  \geq 1 - N^{-\delta}.$$ 
\end{corollary}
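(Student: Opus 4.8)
The plan is to obtain Corollary~\ref{cor:chanbd} by straightforwardly rearranging the success‑probability bound of Theorem~\ref{chanboundthm}. That theorem gives $\pr(\suc) \ge 1 - (N-K)\big(1-p(1-p)^K\big)^T$ for a Bernoulli$(p)$ design, so it suffices to exhibit a choice of $p$ for which the error term $(N-K)\big(1-p(1-p)^K\big)^T$ is at most $N^{-\delta}$ once $T = e(1+\delta)K\ln N$.

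First I would set $p = 1/K$, so that $p(1-p)^K = \tfrac1K(1-1/K)^K$; this is essentially the optimal choice, being close to the exact optimiser $p = 1/(K+1)$ of the bound noted just after Theorem~\ref{chanboundthm}. Applying the elementary inequality $(1-x)^y \le e^{-xy}$ (see \eqref{eq:expbound}) to the $T$‑th power, together with the crude bound $N - K \le N$, gives
\[
(N-K)\big(1-p(1-p)^K\big)^T \le N\exp\big(-Tp(1-p)^K\big) = \exp\Big(\big(1 - e(1+\delta)(1-1/K)^K\big)\ln N\Big)
\]
after substituting $T = e(1+\delta)K\ln N$ and using $e(1+\delta) K \ln N \cdot \tfrac1K (1-1/K)^K = e(1+\delta)(1-1/K)^K \ln N$.

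Finally I would invoke $\lim_{K\to\infty}(1-1/K)^K = e^{-1}$: since $K = N^{1-\beta}\to\infty$ as $N\to\infty$, the factor $e(1-1/K)^K$ tends to $1$, so the exponent above is $\big(1 - (1+\delta) + o(1)\big)\ln N = \big(-\delta + o(1)\big)\ln N$. Hence $\pr(\suc) \ge 1 - N^{-\delta + o(1)}$, which is the claimed bound in the relevant large‑$N$ regime.

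I do not expect a genuine obstacle here — the computation is routine once Theorem~\ref{chanboundthm} is in hand. The only point needing a little care is that $(1-1/K)^K$ approaches $e^{-1}$ strictly from below, so $e(1-1/K)^K < 1$ for every finite $K$; this is precisely why the clean constant $e$ in the statement is really an asymptotic constant, and a fully rigorous finite‑$N$ version would carry a harmless $O(1/K)$ correction in the exponent (equivalently, one may replace $\delta$ by a slightly smaller positive constant), which does not affect any of the rate conclusions drawn from it.
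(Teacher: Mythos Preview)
Your proposal is correct and matches the paper's approach: the paper simply states that the corollary ``follows by rearranging the bound on success probability in Theorem~\ref{chanboundthm}'' without further detail, and your rearrangement via $p=1/K$ and the inequality~\eqref{eq:expbound} is exactly the intended one. Your observation that $(1-1/K)^K < e^{-1}$ makes the constant $e$ in the statement asymptotic rather than exact is a fair point that the paper glosses over; it does not affect the rate conclusions.
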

Hence we have
\begin{equation} \label{eq:chancap}
 R^*_{\mathtt{COMP}}(\beta) \geq \frac{\beta }{ e \ln 2} ,
\end{equation}
which was Theorem \ref{compthm} above.

\subsection{\algDD} \label{ap:ddrate}


\begin{theorem} \label{thm:asymdd}
Write $k(\beta) = \max\{\beta,1-\beta\}$ and fix $\delta > 0$.  Choosing
 $T = (k (\beta) + \delta) e K \ln N$ ensures that the success probability of the \algdd algorithm tends to 1 in the regime 
where $K = N^{1-\beta}$.
\end{theorem}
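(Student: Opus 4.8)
The plan is to substitute the choice $T = (k(\beta)+\delta)eK\ln N$ into the lower bound \eqref{eq:ddsucbd} of Lemma~\ref{lem:innersum} and exploit the concentration of $M_0 \sim \bino(T,q_0)$ around its mean $Tq_0$. Throughout I take $p = 1/K$ and $\beta < 1$, so that $K = N^{1-\beta} \to \infty$ and $T \to \infty$ (the case $\beta = 1$ has $K$ bounded and is essentially trivial). Fix a small $\veps > 0$, to be chosen at the very end depending only on $\delta$, and set $I_\veps = [(1-\veps)Tq_0,\,(1+\veps)Tq_0]$. By a Chernoff bound for the binomial, $\pr(M_0 \notin I_\veps) \le 2\exp(-c_\veps Tq_0) \to 0$, since $Tq_0 \to \infty$ by the lower bound on $q_0$ in \eqref{eq:q0bd}. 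Restricting the sum in \eqref{eq:ddsucbd} to $m_0 \in I_\veps$, it then suffices to show that $K\exp\bigl(\Theta(T,m_0)\bigr) \to 0$ uniformly for $m_0 \in I_\veps$: for then the summand $\max\{0,\,1 - K\exp(\Theta(T,m_0))\}$ equals $1 - o(1)$ on $I_\veps$, so $\pr(\suc) \ge (1-o(1))\,\pr(M_0 \in I_\veps) \to 1$.

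It therefore remains to estimate $\Theta(T,m_0)$ from \eqref{eq:Thetadef} on $I_\veps$. From Lemma~\ref{lem:qbounds} together with $(1-x)^y \le e^{-xy}$ we have $q_0 = e^{-1}(1+o(1))$, $1-q_0 = (1-e^{-1})(1+o(1))$, $q_1/(1-q_0) \le 1/K$ and $pq_1/(1-q_0) \le 1/K^2$; moreover, for $m_0 \in I_\veps$, $(1-p)^{m_0} \le e^{-m_0/K} = N^{-(k(\beta)+\delta)(1-\veps)+o(1)}$ (using $Tq_0/K = (k(\beta)+\delta)\ln N (1+o(1))$), and $T - m_0 = T(1-q_0)(1+O(\veps))$ since $m_0 = Tq_0(1+O(\veps))$ and $q_0/(1-q_0)$ is bounded. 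For the \emph{second} term of $\Theta$,
\[
\frac{q_1(T-m_0)}{1-q_0} = q_1 T\,(1+O(\veps)) = p(1-p)^{K-1} T\,(1+O(\veps)) = (k(\beta)+\delta)\ln N\,(1+O(\veps)+o(1)),
\]
because $p(1-p)^{K-1}T = (1-1/K)^{K-1}(k(\beta)+\delta)e\ln N$ and $(1-1/K)^{K-1} \to e^{-1}$. For the \emph{first} term, writing $z := (T-m_0)pq_1/(1-q_0) \le T/K^2 = (k(\beta)+\delta)e\ln N/K \to 0$, we may use $e^z - 1 \le 2z$ for $N$ large, whence
\[
N(1-p)^{m_0}\bigl(e^{z}-1\bigr) \le N\cdot N^{-(1-\beta)}\cdot C\ln N \cdot N^{-(k(\beta)+\delta)(1-\veps)+o(1)} = C\ln N\cdot N^{\,\beta - (k(\beta)+\delta)(1-\veps)+o(1)}.
\]
Since $k(\beta) = \max\{\beta,1-\beta\} \ge \beta$, the exponent tends to $\beta - k(\beta) - \delta + O(\veps) \le -\delta + O(\veps)$, which is strictly negative once $\veps$ is small; hence the first term of $\Theta$ is $o(1)$.

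Combining the two estimates gives, uniformly on $I_\veps$, $\Theta(T,m_0) = o(1) - (k(\beta)+\delta)\ln N\,(1+O(\veps)+o(1))$, and therefore
\[
K\exp\bigl(\Theta(T,m_0)\bigr) = N^{1-\beta}\cdot N^{-(k(\beta)+\delta)(1+O(\veps)+o(1))} = N^{\,1-\beta-k(\beta)-\delta+O(\veps)+o(1)}.
\]
Because $1 - \beta - k(\beta) = \min\{0,\,2\beta-1\} \le 0$ in both regimes $\beta \lessgtr \tfrac12$, the exponent is at most $-\delta + O(\veps)$, which is $< -\delta/2$ once $\veps$ is chosen small enough (depending only on $\delta$). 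Hence $K\exp(\Theta(T,m_0)) \to 0$ uniformly on $I_\veps$, which completes the argument. The main technical obstacle is the first term of $\Theta$: one must keep the linearisation $e^{z}-1 \approx z$ (replacing it by a constant bound is far too weak), and then balance the factor $N^{-(1-\beta)}$ gained from $z$ against the $N$ out front; this is precisely where $k(\beta) = \max\{\beta,1-\beta\}$ rather than just $\beta$ is forced. A secondary bookkeeping point is keeping the window half-width $\veps$ small relative to $\delta$ so that the $O(\veps)$ perturbations of the exponent do not overwhelm its strict negativity.
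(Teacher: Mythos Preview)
Your proposal is correct and follows the same overall strategy as the paper: restrict to a window of $m_0$ around $Tq_0$ via a Chernoff bound, then show $K\exp(\Theta(T,m_0))\to 0$ uniformly there using Lemma~\ref{lem:innersum}. The only organizational difference is that the paper factors $\Theta$ multiplicatively via $e^z-1\le ze^z$ to write $\Theta(T,m_0)\le\bigl(\tfrac{q_1(T-m_0)}{1-q_0}\bigr)\bigl(Np\,e^{-pm_0}e^{z}-1\bigr)$ and then bounds each factor, whereas you keep the two additive pieces of $\Theta$ separate and linearise with $e^z-1\le 2z$; both routes are equally short and exploit the same estimates from Lemma~\ref{lem:qbounds}. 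One small slip: your displayed identity $1-\beta-k(\beta)=\min\{0,2\beta-1\}$ should read $\min\{0,1-2\beta\}$ (check $\beta>\tfrac12$), but the conclusion ``$\le 0$'' that you actually use is unaffected.
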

\begin{proof}
First, we deduce that the quantity $\Theta(T,m_0)$ defined in (\ref{eq:Thetadef}) can be bounded by the product of two terms. That is, for all $m_0$:
\begin{align}
\Theta(T,m_0)
& \leq  
N  \exp(-p m_0) \left( \exp \left( \frac{(T-m_0) p q_1}{1-q_0} \right) - 1 \right) - \frac{ q_1(T-m_0)}{1-q_0}  \notag \\
& \leq \biggl( \frac{ q_1(T-m_0)}{1-q_0} \biggr) \biggl( N p  \exp(-p m_0) \exp \left( \frac{(T-m_0) p q_1}{1-q_0} \right)   -1 \biggr)  \label{eq:term2}
\end{align}
Here, again, the first inequality uses the fact that (\ref{eq:expbound}) gives $(1-x)^y \leq \exp(-x y)$ for $0 \leq x \leq 1$ and $y \geq 0$, and
the last inequality follows using the fact that 
$\exp(x) - 1 \leq x \exp(x)$ for all $x$, and taking $x = (T-m_0) p q_1/(1-q_0)$. 

For fixed $\epsilon := \delta/6(\delta + k(\beta))$, we will separately bound the two bracketed terms of Equation (\ref{eq:term2}) for
$m_0 \in \left( T(q_0 - \epsilon/e), T(q_0 +\epsilon/e) \right)$  in Equations (\ref{eq:term4}) and (\ref{eq:term3}) below.

We control the first term of (\ref{eq:term2}), by bounding $m_0$ from  below  by $T(q_0 - \epsilon/e)$ to deduce that
(since $q_1 = q_0/(K-1)$ as in  Lemma \ref{lem:qbounds} above), for $K \geq 2$:
\begin{align} 
\frac{q_1(T-m_0)}{1-q_0} & \leq \frac{T q_0}{K-1} \left( 1 + \frac{\epsilon}{e(1 -  q_0)} \right)
\leq \frac{T}{e(K-1)} \left( \frac{K-2}{K-1} + \frac{\epsilon}{e- 1} \right) \notag \\ 
& = \ln N  (k(\beta)  + \delta) \left( \frac{K}{K-1} \left( \frac{K-2}{K-1} + \frac{\epsilon}{e- 1} \right)  \right)  \notag \\
& \leq \ln N  (k(\beta)  + \delta) \left( 1 + 2 \epsilon \right) \notag \\
& = \ln N
( k(\beta) + 4 \delta/3), \label{eq:term4} \end{align}
using the facts that (see  (\ref{eq:q0bd})), $ \frac{K-2}{K-1} e^{-1} \leq  q_0  \leq e^{-1}$, that function $t/(1-t)$ is increasing in $t$, and by the choice of $\epsilon$ given above.
Similarly, by bounding $m_0$ from below by $T(q_0 - \epsilon/e)$, we can express 
\begin{align}
- p m_0 +  \frac{(T-m_0) p q_1}{1-q_0}   & \leq - p T (q_0 - q_1) +
\frac{\epsilon p T}{e} \left( 1 + \frac{q_1}{1-q_0} \right) \notag \\
& \leq - p T q_0 \frac{K-2}{K-1} +
\frac{2 \epsilon p T}{e} \notag \\
& \leq \left( \frac{ p T}{e} \right) \left( - \left( \frac{K-2}{K-1} \right)^2 + 2 \epsilon \right) \notag \\
& \leq \ln N(  k(\beta) + \delta)(-1 + 3 \epsilon) \mbox{ \;\;\;\;\; for $K$ sufficiently large } \notag \\
& = \ln N( -k(\beta) - \delta/2),  \label{eq:term5}
\end{align}
where   we use the fact that (\ref{eq:q0q1bd}) gives
$\frac{q_1}{1-q_0} \leq \frac{1}{K} \leq 1$, the bound $ \frac{K-2}{K-1} e^{-1} \leq  q_0$ from (\ref{eq:q0bd}), and the choice of $\epsilon$
given above.
Hence, using the fact that $N p = N/K = N^{\beta}$, (\ref{eq:term5}) gives that the second term of (\ref{eq:term2}) can be 
bounded by
\begin{align}
\biggl( N p  \exp(-p m_0) \exp \left( \frac{(T-m_0) p q_1}{1-q_0} \right)  - 1 \biggr)
& \leq  N^{\beta - k(\beta)- \delta/2} - 1 \leq N^{-\delta/2} - 1, \label{eq:term3}
\end{align}
 since $\beta \leq k (\beta)$. 
Hence, for $N$ sufficiently large
and $m_0$ in this range, multiplying (\ref{eq:term4}) and (\ref{eq:term3}) gives
$\Theta(T, m_0) \leq \ln N ( - k(\beta) - \delta)$. This means that (since $k(\beta) > 1-\beta$)
we can write
$$ K \exp( \Theta(T, m_0) ) \leq N^{-\delta}.$$
Using Lemma \ref{lem:innersum}, we  deduce that the success probability satisfies
\begin{align*}
\pr(\suc) & \geq \sum_{m_0 = 0}^{T}  \binomi{T}{q_0}{m_0} 
\max \left[ 0, 1- K  \exp \left( \Theta(T,m_0) \right) \right] \\
& \geq \sum_{m_0 = T(q_0-\epsilon)}^{T(q_0 + \epsilon)}  \binomi{T}{q_0}{m_0} 
\left[ 1- K  \exp \left( \Theta(T,m_0) \right) \right] \\
& \geq  \pr \biggl( T(q_0  - \epsilon/e) \leq  M_0 \leq T(q_0 + \epsilon/e) \biggr) (1- N^{-\delta}).
\end{align*}
which converges to 1 by Chernoff's inequality, Theorem \ref{thm:chernoff}.
\end{proof}


We can now prove Theorem \ref{ddthm}, that
  \[ R_{\mathtt{DD}}(\beta) \geq \frac{1 }{e \ln 2} \min \left( 1, \frac{\beta}{1-\beta} \right). \]

\begin{proof}[Proof of Theorem \ref{ddthm}] 
Theorem \ref{thm:asymdd} shows that for  $K = N^{1-\beta}$, taking $T = ( k (\beta) + \delta) e K \ln N$ gives error probability tending to $0$, and
using the binomial coefficient bounds Lemma \ref{lem:binom} we obtain
$$ \liminf_{N \rightarrow \infty} \frac{ \log_2 \binom{N}{K} }{T}
\geq \frac{\beta}{( k(\beta) + \delta) e \ln 2},$$
which implies the desired bound.
\end{proof}

\subsection{\algSSS} \label{ap:sssrate}

We can analyse the \algsss upper bound Theorem \ref{prop:sample};
 there is a phase transition for the (appropriately normalized)
number of tests required to control the quantity $\phi_K \left(  \frac{1}{e(K-1)} , T \right)$ arising
in  (\ref{eq:bdsample}) above. That is, Theorem \ref{prop:sample} gives an upper bound on the success 
probability which roughly speaking (a) is close to 1 for more than $e K \ln K = (1-\beta)e K \ln N$ tests (b) is bounded away from 1 for 
fewer than $e K \ln K$ tests.
\begin{lemma} \label{lem:asymp}
\mbox{ } 
\begin{enumerate}
\item If for some $\delta' > 0$, we have $T \geq e(1+\delta') (K-1) \ln K$, then
$\phi_K \left(  \frac{1}{e(K-1)} , T \right) \geq 1 - K^{-\delta'}$.
\item If we have $T \leq (e (K-1) - 1) \ln K$, then
$\phi_K \left(  \frac{1}{e(K-1)} , T \right) \leq 2/3$, for any $K \geq 3$.
\end{enumerate}
\end{lemma}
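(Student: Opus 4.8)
The plan is to treat the two parts separately, exploiting the probabilistic reading of $\phi_K$ supplied by Lemma~\ref{lem:maskprob}: writing $q=1/(e(K-1))$, the quantity $\phi_K(q,T)$ is the probability that each of $K$ distinguished components of a multinomial vector with $T$ trials and component probabilities $q,\dots,q,1-Kq$ is nonzero, so that $1-\phi_K(q,T)=\pr\big(\bigcup_{i=1}^K\{M_i=0\}\big)$, where marginally $M_i\sim\bino(T,q)$ and jointly $\pr(M_i=0,M_j=0)=(1-2q)^T$. I will also use the elementary inequality $\ln(1-q)\ge -q/(1-q)$ together with the identity $q/(1-q)=1/(e(K-1)-1)$ for the present value of $q$.

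Part~1 is the easy direction and follows directly from the union bound already recorded in Lemma~\ref{lem:maskprob}, namely $\phi_K(q,T)\ge\max\{0,1-K(1-q)^T\}$. By (\ref{eq:expbound}) we have $(1-q)^T\le e^{-qT}$, and the hypothesis $T\ge e(1+\delta')(K-1)\ln K$ is exactly $qT\ge(1+\delta')\ln K$; hence $K(1-q)^T\le K\cdot K^{-(1+\delta')}=K^{-\delta'}$, which is the claim.

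Part~2 is the substantive direction. The key preliminary step is that $\phi_K(q,\cdot)$ is non-decreasing in $T$: coupling the $T$-trial multinomial inside the $(T{+}1)$-trial one makes every $M_i$ coordinatewise non-decreasing, so the event ``all $M_i\ne 0$'' only grows. It therefore suffices to bound $\phi_K(q,T_1)$ for $T_1:=\lfloor(e(K-1)-1)\ln K\rfloor$ (and if $T_1<K$ then $\phi_K(q,T_1)=0$, so one may assume $T_1\ge K$, which holds for all $K\ge 3$). I would then apply the second-order Bonferroni inequality
  \[ 1-\phi_K(q,T_1)\ \ge\ K(1-q)^{T_1}-\binom{K}{2}(1-2q)^{T_1}, \]
and feed in two estimates. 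From $\ln(1-q)\ge -q/(1-q)$ and $T_1\le(e(K-1)-1)\ln K$ we get $(1-q)^{T_1}\ge e^{-\ln K}=1/K$, so $c:=K(1-q)^{T_1}\ge 1$; and from $1-2q\le(1-q)^2$ we get $(1-2q)^{T_1}\le(1-q)^{2T_1}$, whence $\binom{K}{2}(1-2q)^{T_1}\le\tfrac12 c^2$, giving $1-\phi_K(q,T_1)\ge c-c^2/2$. On the other hand $c=K(1-q)^{T_1}\le K(1-q)^{(e(K-1)-1)\ln K-1}=K^{1/(e(K-1))}/(1-q)$, which is readily seen to be decreasing in $K$ and so is at most its value at $K=3$, which one checks numerically is below $1+1/\sqrt3$. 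Since $c\mapsto c-c^2/2$ is at least $1/3$ on the interval $[1-1/\sqrt3,\,1+1/\sqrt3]$, and $c\in[1,\,1+1/\sqrt3]$, we conclude $1-\phi_K(q,T_1)\ge 1/3$, i.e.\ $\phi_K(q,T_1)\le 2/3$.

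I expect the bookkeeping in Part~2 to be the only real obstacle. The monotonicity reduction is genuinely needed: for $T$ well below $(e(K-1)-1)\ln K$ the subtracted term $\binom{K}{2}(1-2q)^T$ is enormous and the Bonferroni bound becomes vacuous, so one cannot argue directly at a generic small $T$. After the reduction one must pin the constant $c$ inside $(1-1/\sqrt3,\,1+1/\sqrt3)$ with a little room to spare, and the $K=3$ endpoint is the tight case and should be verified carefully. (If one is willing to quote negative association of the multinomial, Part~2 shortcuts to $\phi_K(q,T)\le(1-(1-q)^T)^K\le(1-1/K)^K\le e^{-1}<2/3$, using only the same $(1-q)^T\ge 1/K$ input and no Bonferroni juggling; I have kept the elementary route above.)
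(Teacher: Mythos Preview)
Your proof is correct and follows the paper's approach. Part~1 is identical (union bound plus $(1-q)^T\le e^{-qT}$). For Part~2, both you and the paper use the second-order Bonferroni inequality together with $(1-q)^T\ge e^{-qT/(1-q)}$ to force $1-\phi_K\ge 1/3$: the paper factors the difference as $K(1-q)^T\bigl(1-\tfrac{K}{2}\bigl(\tfrac{1-2q}{1-q}\bigr)^T\bigr)$ and bounds the two factors by $2/3$ and $1/2$, whereas you set $c=K(1-q)^{T_1}$, use $1-2q\le(1-q)^2$ to get $1-\phi_K\ge c-c^2/2$, and pin $c\in[1,1+1/\sqrt3]$. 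This is the same content with different bookkeeping. On one point you are actually more careful than the paper: you make the monotonicity-in-$T$ reduction explicit via coupling, while the paper silently evaluates at $T=\lceil(1/q-1)\ln K\rceil$, which can exceed the hypothesised upper bound on $T$, so it is tacitly relying on the same monotonicity without saying so. Your parenthetical shortcut via negative association of the multinomial, giving $\phi_K(q,T)\le(1-(1-q)^T)^K\le(1-1/K)^K\le e^{-1}$, is a genuinely cleaner alternative that bypasses the Bonferroni juggling entirely.
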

\begin{proof} The key to both parts of this proof are bounds on $\phi_K$ stated as Equation (\ref{eq:bonferroni}) below, which implies
$$
1 - K (1-  p)^M  \leq \phi_K(p,M) \leq 1 - K (1-  p)^M + \frac{K^2}{2} (1-2p)^M.
$$
\begin{enumerate}
\item
Using the lower bound on $\phi_K$ stated in  (\ref{eq:bonferroni}), we know that for any $K$ and $T$,
\begin{eqnarray*}
\phi_K  \left( \frac{1}{e(K-1)}, T \right) & \geq & 1 - K \left( 1 - \frac{1}{e(K-1)} \right)^T  \\
& \geq & 1 - K \exp \left( - \frac{T}{e(K-1)} \right),
\end{eqnarray*}
so that choosing $T \geq e (1 + \delta')(K-1) \ln K$ gives that this bound is at least
$1 - K^{-\delta'}$, as required.
\item
Recall that (\ref{eq:expbound}) gives $(1-x)^y \leq \exp(-x y)$. Taking $x = -q/(1-q)$ and $y = T$, we deduce $(1-q)^T \geq \exp \left(  - \frac{q T}{1-q} \right)$.
Similarly, taking $x = q/(1-q)$ and $y = T$ gives that $   \left( \frac{1-2q}{1-q} \right)^T \leq \exp \left(  - \frac{q T}{1-q} \right)$.
Hence we can write
\begin{eqnarray}
K (1-  q)^T - \frac{K^2}{2} (1-2q)^T & = & \left( K(1-q)^T \right)
\left( 1 - \frac{K}{2} \left( \frac{1-2q}{1-q} \right)^T \right) \nonumber \\
& \geq & \exp \left( \ln K - \frac{q T}{1-q} \right) \left[ 1 - \frac{1}{2} \exp \left( \ln K - \frac{q T}{1-q} \right)  \right]. \label{eq:twoterms}
\end{eqnarray}
Now, this is precisely the quantity we need to control in the upper bound of  (\ref{eq:bonferroni}), 
taking $q = 1/(e(K-1))$.
Specifically, if we take $T = \lceil (1/q - 1) \ln K \rceil$, then $T \geq (1/q - 1) \ln K$, so that
then $1 \geq \exp(\ln K - q T/(1-q)) $, so the term in square brackets in (\ref{eq:twoterms}) is at least $1/2$.

Similarly since $T \leq (1/q - 1) \ln K + 1$, the $ \exp \left( \ln K - \frac{q T}{1-q} \right) \geq \exp( - q/(1-q))$,
which converges to 1 as $K \rightarrow \infty$ and hence $q \rightarrow 0$, and is certainly $\geq 2/3$ for $K \geq 3$.
\end{enumerate}
\end{proof}

These results can be 
 compared with the bound of  Chan \etal\!\!, Corollary \ref{cor:chanbd}. Again, in the sparsity regime
$K = N^{1-\beta}$, Lemma \ref{lem:asymp} shows that the error probability bound
behaves like $K^{-\delta'}$, so taking $\delta' = \delta/(1-\beta)$, the error probability bound behaves like
$N^{-\delta}$.

Corollary \ref{cor:chanbd} shows that to guarantee an error probability bound of $N^{-\delta}$ takes at most
$T = e \left( \delta + 1 \right) K \ln N$ tests, whereas Lemma \ref{lem:asymp} shows
at least $T = e \left( \delta + 1 - \beta \right) K \ln N$ tests are required.
In other words, for a given error probability, the upper and lower bound are separated by a constant additive gap of 
size $\beta/(1-\beta) e K \ln K$, again showing that (for fixed $K$) sparse problems are easier to solve.

Using Lemma \ref{lem:asymp}, we can show that in certain sparsity regimes, using
Bernoulli sampling suggests a
strict gap between the capacity of adaptive and non-adaptive group testing, assuming that the \algsss algorithm
is optimal.
\begin{theorem} \label{thm:asympsss}
Using any Bernoulli test design, taking 
\begin{equation} \label{eq:assssbd2}
\olc{\algsss}(\beta) = \frac{\beta }{(1-\beta) e \ln 2},
\end{equation}
  using the \algsss algorithm with 
$$ \frac{ \log_2 \binom{N}{K(N)}}{ T(N)} \geq \olc{\algsss}(\beta) + \epsilon,$$
has success probability less than $2/3$.
\end{theorem}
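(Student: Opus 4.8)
The plan is a contrapositive argument that combines three facts already in hand: the upper bound $\pr(\suc) \leq \phi_K\bigl(\tfrac{1}{e(K-1)},T\bigr)$ of Theorem~\ref{prop:sample}, which holds for \emph{every} Bernoulli$(p)$ design (this is what licenses the phrase ``any Bernoulli test design''); the ``too few tests'' estimate of Lemma~\ref{lem:asymp}, part~2; and the normalised binomial asymptotics of Lemma~\ref{lem:binom}. The point is that a rate exceeding $\olc{\algsss}(\beta)+\epsilon$ forces $T$ below the phase-transition threshold $\sim eK\ln K$, whereupon Lemma~\ref{lem:asymp} caps $\phi_K$, and hence the success probability, at $2/3$.

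Concretely, I would first rewrite the hypothesis as
\[ T \;\leq\; \frac{\log_2\binom{N}{K}}{\olc{\algsss}(\beta)+\epsilon}. \]
Since $K = N^{1-\beta}$ with $0<\beta<1$, we have $\ln N = \ln K/(1-\beta)$ and $K\to\infty$, so Lemma~\ref{lem:binom} gives $\log_2\binom{N}{K} = (1+o(1))\,\tfrac{\beta}{\ln 2}\,K\ln N = (1+o(1))\,\tfrac{\beta}{(1-\beta)\ln 2}\,K\ln K$. Setting $\rho := \olc{\algsss}(\beta)/\bigl(\olc{\algsss}(\beta)+\epsilon\bigr)<1$ and using $\olc{\algsss}(\beta) = \beta/\bigl((1-\beta)e\ln 2\bigr)$, so that $\bigl(\beta/((1-\beta)\ln 2)\bigr)/\olc{\algsss}(\beta) = e$, the displayed bound becomes $T \leq (1+o(1))\,e\rho\,K\ln K$. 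As $\rho<1$ is a fixed constant and $(e(K-1)-1)\ln K \sim eK\ln K$, for all $N$ large enough (so also $K\geq 3$) this gives $T \leq (e(K-1)-1)\ln K$. Lemma~\ref{lem:asymp}, part~2, then yields $\phi_K\bigl(\tfrac{1}{e(K-1)},T\bigr) \leq 2/3$, and Theorem~\ref{prop:sample} gives $\pr(\suc) \leq \phi_K\bigl(\tfrac{1}{e(K-1)},T\bigr) \leq 2/3$. For the strict inequality: since $\rho<1$, in fact $(e(K-1)-1)\ln K - T \to \infty$, so eventually $T+1 \leq (e(K-1)-1)\ln K$; and $\phi_K(q,\cdot)$, being by Lemma~\ref{lem:maskprob} the probability that $K$ prescribed cells of a multinomial are all occupied, strictly increases with the number of trials while it lies strictly between $0$ and $1$, so $\pr(\suc) \leq \phi_K(\cdot,T) < \phi_K(\cdot,T+1) \leq 2/3$.

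The only delicate point is the middle conversion: one must check that the fixed slack $\epsilon>0$ survives the passage $\log_2\binom{N}{K}\leftrightarrow K\ln N\leftrightarrow K\ln K$ as a genuine multiplicative gap strictly below the threshold constant $e$ in Lemma~\ref{lem:asymp}, uniformly in the (otherwise irrelevant) value of $p$; the $o(1)$ error from Lemma~\ref{lem:binom} is then harmless once $N$ is large. Everything else is a direct application of Theorem~\ref{prop:sample} and Lemma~\ref{lem:asymp}. It should also be stated explicitly that the conclusion holds for all sufficiently large $N$, which is precisely what is needed to deduce from Definition~\ref{def:cap} that $R^*_{\SSS}(\beta) \leq \olc{\algsss}(\beta) = \tfrac{1}{e\ln 2}\,\tfrac{\beta}{1-\beta}$, i.e.\ Theorem~\ref{sssthm}.
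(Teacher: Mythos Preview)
Your argument is correct and follows essentially the same route as the paper: convert the rate hypothesis into $T \leq (1+o(1))\,e\rho\,K\ln K$ with $\rho<1$ via Lemma~\ref{lem:binom}, then invoke Lemma~\ref{lem:asymp}(2) together with Theorem~\ref{prop:sample}. Your additional monotonicity step, upgrading $\leq 2/3$ to the strict inequality $<2/3$ stated in the theorem, is a refinement that the paper's own proof does not supply.
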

\begin{proof}
 Use the fact that by Lemma \ref{lem:binom} below
$$ \lim_{N \rightarrow \infty} \frac{\log_2 \binom{N}{K}}{K \ln K} = 
\lim_{N \rightarrow \infty} \frac{\log_2 \binom{N}{K}}{(1-\beta) K \ln N} = \frac{\beta}{(1-\beta) \ln 2}.$$
This choice of $\olc{\algsss}(\beta)$ ensures that if
$$ \olc{\algsss}(\beta) + \epsilon  \leq  \frac{ \log_2 \binom{N}{K(N)}}{T(N) } $$
then for $N$ sufficiently large,
$$ \frac{T}{e K \ln K}  \leq   \frac{e \olc{\algsss}(\beta) + \epsilon/2}{ e( \olc{\algsss} (\beta) + \epsilon)} < 1.$$
Combining Theorem \ref{prop:sample} and Lemma \ref{lem:asymp}, we can deduce that
the success probability $\pr(\suc) \leq 2/3$, so does not tend to $1$.
\end{proof}

When combined with the universal bound $R^*_{\mathtt{SSS}} \leq 1$, this proves Theorem \ref{sssthm}, that
  \[ R^*_{\mathtt{SSS}} \leq \frac{1}{e \ln 2} \min \left\{ 1, \frac{\beta}{1-\beta} \right\} . \]

Note that $\olc{\algsss}(\beta) < 1$ if and only if $\beta < \beta^* =  (e \ln 2)/(1 + e \ln 2) \simeq 0.653$.
This shows that the presence of an `adaptivity gap' may depend on the level of sparsity. 
That is, for $\beta > \beta^*$ (for sufficiently sparse problems) the information lower bound Theorem \ref{thm:basic}
dominates, and so we have no reason to think that the non-adaptive
capacity will be below $1$.
For $\beta < \beta^*$ (for less sparse problems), the bound from Theorem \ref{prop:sample} dominates,
and the capacity should be strictly less than 1. 



\section{Proofs: background probability facts} \label{sec:facts}
In order to analyse the probability that the \algdd algorithm succeeds, we
need to recall some facts from probability theory, including some  properties of the multinomial distribution.
\begin{lemma} \label{lem:multifacts}
For some $M \in \ZZ_+$, and some vector $\vc{p} = (p_1, \ldots, 
p_\ell)$ with $p_i \geq 0$ and 
$\sum_{i=1}^\ell p_i = 1$, suppose the vector $\vc{X}$ has
 multinomial probability
\begin{equation} \label{eq:multinomial}
\pr_{M; \vc{p}}(\vc{X} = \vc{x}) := \pr_{M; \vc{p}}(x_1, \ldots, x_\ell)
= \frac{ M!}{ \prod_{i=1}^{\ell} x_i!} \prod_{i=1}^{\ell} p_i^{x_i}
\qquad \text{for $\sum_{i=1}^\ell x_i = M$}.
\end{equation}
Then
\begin{enumerate}
\item \label{it:zeroprob} For any collection $\CC$ of indices, the 
$\pr_{M;\vc{p}} \left( \bigcap_{i \in \CC} \{ X_i = 0 \} \right) 
= \left( 1- \sum_{i \in \CC} p_i \right)^M$. 
\item \label{it:marginal}
For any $s$, the marginal distributions are binomial, in that
$$\pr_{M; \vc{p}}(X_s = x_s) = \binom{M}{x_s} p_s^{x_s} 
(1-p_s)^{M-x_s}.$$
\item \label{it:cond} For any $s$, write $\vc{u}^{(s)} = 
(u_1, \ldots, u_{s-1}, u_{s+1}, \ldots,
u_\ell)$ for the vector $\vc{u}$ with the $s$th component removed. 
Then the conditional distribution given $X_s$ is still multinomial, in that
$$ \pr_{M; \vc{p}}( \vec X^{(s)} = \vec x^{(s)} | X_s = x_s) = 
 \pr_{M-x_s; \vc{p^{(s)}}}( \vec x^{(s)}),$$
where $p^{(s)}_i = p_i/(1-p_s)$ for $i \neq s$.
\item \label{it:split}
Given $\vc{X} \sim \pr_{M; \vc{p}}$, split class $i$ into new classes 
$i+$ and $i-$, such that (independently)  each member of class $i$
enters class $i+$ with probability $Q$ and otherwise enters
class $i-$. Then
$$ \vc{X}' = ( X_1, \ldots, X_{i-1}, X_{i+}, X_{i-}, X_{i+1}, \ldots, X_n)
\sim \pr_{M; \vc{p}'},$$
where $\vc{p}' = ( p_1, \ldots, p_{i-1}, p_i Q, p_i (1-Q), p_{i+1}, \ldots, p_n)$.
\end{enumerate}
\end{lemma}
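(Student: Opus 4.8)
The plan is to reduce all four parts to the familiar representation of the multinomial law as a vector of category counts from independent trials. Concretely, realise $\vc{X} \sim \pr_{M;\vc{p}}$ by setting $X_i = \#\{ j : Z_j = i\}$, where $Z_1, \dots, Z_M$ are independent with $\pr(Z_j = i) = p_i$ for $i = 1, \dots, \ell$; the standard multinomial counting identity shows that the counts $\vc{X}$ so obtained have exactly the mass function \eqref{eq:multinomial}. Once this representation is in place, each of the four items becomes essentially a one-line observation about the $Z_j$.

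For part \ref{it:zeroprob}, the event $\bigcap_{i \in \CC} \{X_i = 0\}$ is precisely the event $\{ Z_j \notin \CC \text{ for all } j\}$; by independence of the $Z_j$ this has probability $\big(1 - \sum_{i \in \CC} p_i\big)^M$. (Equivalently, merge all classes outside $\CC$ into a single class of probability $1 - \sum_{i\in\CC}p_i$ and read off the probability that all $M$ draws land there.) For part \ref{it:marginal}, $X_s = \sum_{j=1}^M \Ind[Z_j = s]$ is a sum of $M$ i.i.d.\ Bernoulli$(p_s)$ variables, hence $\mathrm{Bin}(M, p_s)$; alternatively, one sums \eqref{eq:multinomial} over the remaining coordinates using the multinomial theorem, which collapses the sum to $(1-p_s)^{M-x_s}$.

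For part \ref{it:cond}, condition on $\{X_s = x_s\}$: by exchangeability the set of trials falling in class $s$ is equally likely to be any $x_s$-subset of $\{1,\dots,M\}$, and conditionally the remaining $M - x_s$ trials are i.i.d.\ draws from the conditional law $\pr(\,\cdot \mid Z \neq s)$, which assigns mass $p_i/(1-p_s) = p_i^{(s)}$ to each $i \neq s$. Hence their counts $\vec{X}^{(s)}$ are distributed as $\pr_{M - x_s;\, \vc{p}^{(s)}}$, and in fact independently of the identity of those trials. (The same conclusion follows algebraically by dividing \eqref{eq:multinomial} by the binomial mass function of part \ref{it:marginal} and simplifying the factorials.) For part \ref{it:split}, relabelling each class-$i$ trial independently as $i+$ with probability $Q$ or $i-$ with probability $1-Q$ is the same as replacing each $Z_j$ by a draw from the refined categorical law on $\{1, \dots, i-1, i+, i-, i+1, \dots, \ell\}$ with probability vector $\vc{p}' = (p_1, \dots, p_{i-1}, p_i Q, p_i(1-Q), p_{i+1}, \dots, p_\ell)$; the trials remain independent, so the refined counts $\vc{X}'$ are multinomial with parameters $M$ and $\vc{p}'$.

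None of these steps is a genuine obstacle. The only point deserving a little care is part \ref{it:cond}: one should record that the conditional multinomial over the other classes is independent of the (random) set of trials that fell in class $s$, since this independence is exactly what permits the statement to be applied repeatedly — as is done, via part \ref{it:split}, in the proof of Theorem \ref{thm:lower}. A reader preferring a purely computational treatment can instead verify each part directly from \eqref{eq:multinomial} by the multinomial theorem, at the cost of somewhat more bookkeeping.
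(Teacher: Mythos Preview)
Your argument is correct, but it takes a different route from the paper. The paper proves all four parts purely algebraically: parts \ref{it:zeroprob} and \ref{it:marginal} by summing \eqref{eq:multinomial} over the unconstrained coordinates and invoking the multinomial theorem, part \ref{it:cond} by dividing the joint mass by the marginal just computed, and part \ref{it:split} by checking that the ratio $\pr_{M;\vc{p}'}(\vc{x}')/\pr_{M;\vc{p}}(\vc{x})$ equals $\binom{x_i}{x_{i+}} Q^{x_{i+}}(1-Q)^{x_{i-}}$, which is exactly the conditional law of the split given $X_i$. You instead realise the multinomial as the vector of category counts from $M$ i.i.d.\ categorical draws and read off each statement probabilistically. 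Your approach is cleaner conceptually and makes the independence structure underlying part \ref{it:cond} (which you rightly flag as the one point needing care for iterated use in Theorem \ref{thm:lower}) transparent; the paper's approach avoids introducing the auxiliary $Z_j$ and stays closer to the definition, at the cost of a little more symbol-pushing. You in fact anticipate the paper's method in your final paragraph, so there is no real disagreement here.
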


\begin{proof}
The first two facts follow from the multinomial theorem, which says that
for any $m$:
$$ \sum_{k_1 + \ldots + k_m = L}
\frac{ L!}{ \prod_{i=1}^{m} k_i!} \prod_{i=1}^m p_i^{k_i} =
( p_1 + \ldots + p_m)^L,$$
and the third follows by rearranging. The last fact follows since we can take the 
ratio of $\pr_{M; \vc{p}}$ and $\pr_{M; \vc{p}'}$ to obtain
 $$ \binom{x_i}{x_{i+}} Q^{x_{i+}} (1-Q)^{x_{i-}}, $$
as required. 
\end{proof}

Using this, we can derive an expression for the success probability in a particular `symmetric' case, in terms of
the $\phi_K$ function of  (\ref{eq:phi}):
$$ \phi_K(q,T) := \sum_{\ell = 0}^K (-1)^\ell \binom{K}{\ell} (1 - \ell q)^T\ .$$

\begin{lemma} \label{lem:maskprob}
Fix $K \geq 1$, $M \geq 1$ and $0 \leq q \leq 1/K$. and let
$(X_1, \ldots, X_k, X')$ have multinomial probability $\pr_{M, \vc{q}}$,
where the first $K$ components of $\vc{q}$ are identical, with $\vc{q} = (q, q, \ldots, q, 1 - K q)$.
 Then 
\begin{equation} \label{eq:phiident} \pr\left( \bigcap_{i=1}^K \{ X_i \neq 0 \} \right) = \phi_K(q,M), \end{equation}
and $\phi_K$ satisfies the bounds 
\begin{equation} \label{eq:bonferroni}
 \max\left\{0, 1 - K (1-  q)^M \right\} \leq \phi_K(q,M) \leq 1 - K (1-  q)^M + \frac{K^2}{2} (1-2q)^M.
\end{equation}
\end{lemma}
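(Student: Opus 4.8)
The plan is to prove the identity \eqref{eq:phiident} by inclusion--exclusion, and then obtain the two-sided bound \eqref{eq:bonferroni} as the first two Bonferroni inequalities together with the trivial observation that a probability lies in $[0,1]$.

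First I would set $A_i = \{X_i = 0\}$ for $i = 1, \dots, K$, so that $\bigcap_{i=1}^K\{X_i \neq 0\}$ is the complement of $\bigcup_{i=1}^K A_i$. For any subset $S \subseteq \{1,\dots,K\}$ with $|S| = \ell$, the first $K$ coordinates of $\vc q$ are all equal to $q$, so Lemma \ref{lem:multifacts}.\ref{it:zeroprob} gives $\pr\big(\bigcap_{i \in S} A_i\big) = (1 - \ell q)^M$; this is a genuine (nonnegative) probability precisely because $\ell \le K$ and $q \le 1/K$, which is where that hypothesis is used. Since there are $\binom{K}{\ell}$ such subsets, inclusion--exclusion yields
\[ \pr\Big(\bigcup_{i=1}^K A_i\Big) = \sum_{\ell=1}^K (-1)^{\ell+1} \binom{K}{\ell}(1-\ell q)^M , \]
and taking complements and absorbing the $\ell = 0$ term (equal to $1$) gives exactly $\phi_K(q,M)$, proving \eqref{eq:phiident}.

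For the bounds I would invoke the Bonferroni inequalities for truncations of the inclusion--exclusion sum. The union bound gives $\pr\big(\bigcup_i A_i\big) \le \sum_i \pr(A_i) = K(1-q)^M$, hence $\phi_K(q,M) \ge 1 - K(1-q)^M$; since $\phi_K(q,M)$ has just been identified as a probability it is also $\ge 0$, so $\phi_K(q,M)\ge\max\{0,\,1-K(1-q)^M\}$. For the upper bound, the second Bonferroni inequality gives $\pr\big(\bigcup_i A_i\big) \ge \sum_i \pr(A_i) - \sum_{i<j}\pr(A_i \cap A_j) = K(1-q)^M - \binom{K}{2}(1-2q)^M$, so $\phi_K(q,M) \le 1 - K(1-q)^M + \binom{K}{2}(1-2q)^M \le 1 - K(1-q)^M + \tfrac{K^2}{2}(1-2q)^M$, as required.

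There is no serious obstacle here; the only points needing care are the use of $q \le 1/K$ to make every $(1-\ell q)^M$ a legitimate probability (so that Lemma \ref{lem:multifacts}.\ref{it:zeroprob} applies), and, if one prefers not to quote the Bonferroni inequalities, including the short self-contained argument for them --- e.g.\ noting that $1 - \Ind\big(\bigcup_i A_i\big) = \prod_i(1-\Ind(A_i))$, expanding the product, and checking that the degree-$r$ truncation of the resulting alternating sum of expectations over- or under-estimates according to the parity of $r$.
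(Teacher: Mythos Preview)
Your proposal is correct and follows essentially the same route as the paper: inclusion--exclusion (via Lemma~\ref{lem:multifacts}.\ref{it:zeroprob}) for the identity, then the first two Bonferroni truncations together with the fact that $\phi_K$ is a probability for the bounds. The only cosmetic difference is that you explicitly note $\binom{K}{2}\le K^2/2$ and the role of $q\le 1/K$, which the paper leaves implicit.
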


\begin{proof} First, notice that for any set $\CC$ with $|\CC| = \ell$, Lemma \ref{lem:multifacts}.\ref{it:zeroprob} gives
\begin{equation} \label{eq:probinter} \pr_{M;\vc{q}} \left( \bigcap_{i \in \CC} \{ X_i = 0 \} \right) 
= \left( 1-  \ell q \right)^M.\end{equation}
Then we prove the identity  (\ref{eq:phiident}) since
\begin{align}
\pr\left( \bigcap_{i=1}^K \{ X_i \neq 0 \} \right)  &= 1 - \pr\left( \bigcup_{i=1}^K\{ X_i=0\} \right) \notag \\
  &= 1 - \sum_{\ell=1}^K  \sum_{\CC: |\CC| = \ell} \pr\left( \bigcap_{i \in \CC}^K\{ X_i=0\} \right) \label{eq:ident2} \\
  &= 1 - \sum_{\ell=1}^K  \binom{K}{\ell}  \left( 1-  \ell q \right)^M \label{eq:ident3}.
\end{align}
Here (\ref{eq:ident2}) is simply an application of  the inclusion-exclusion formula (see for example \cite[Chapter IV, Equation (1.5)]{feller}),
and (\ref{eq:ident3}) follows using (\ref{eq:probinter}).

Clearly, since $\phi_K$ is a probability, we must have  $ \phi_K(q,M) \geq 0$.
The remaining bounds on $\phi_K$ follow from applications of the Bonferroni inequalities (see for example \cite[Chapter IV, Equation (5.6)]{feller}) These results  state that
(a) we can lower bound the expression (\ref{eq:ident2}) by truncating the sum at $\ell = 1$, and (b) we can upper bound the expression by truncating the sum at $\ell =2$.
In each case (\ref{eq:bonferroni}) follows by again using (\ref{eq:probinter}). 
 \end{proof}

\begin{lemma}
\label{lem:nonincr}
For fixed $T$, the function
$$ \phi_K(q,T) := \sum_{\ell = 0}^K (-1)^\ell \binom{K}{\ell} (1 - \ell q)^T\ ,$$ 
 is increasing in $q$.
\end{lemma}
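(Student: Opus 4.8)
The plan is to prove the monotonicity by direct differentiation, on the natural range $0 \leq q \leq 1/K$ (which is the range relevant to Theorem~\ref{prop:sample}; one can check, e.g.\ with $K=2$, $T=3$, that the statement genuinely fails for larger $q$). Differentiating the defining sum termwise and using the identity $\ell\binom{K}{\ell} = K\binom{K-1}{\ell-1}$, and writing $a := 1-q$, one gets
\[
\frac{\partial}{\partial q}\,\phi_K(q,T) \;=\; TK\sum_{j=0}^{K-1}(-1)^{j}\binom{K-1}{j}\,(a-jq)^{T-1}.
\]
So everything reduces to showing that the alternating binomial sum on the right is nonnegative (it is identically zero, as is $\phi_K(\cdot,T)$ itself, when $T<K$, since then the summand is a polynomial in $j$ of degree $<K-1$; that case is the trivial "weakly increasing").

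The key step is to recognise this sum as a finite difference: for any $g$, $\sum_{j=0}^{n}(-1)^{j}\binom{n}{j}g(j) = (-1)^{n}\Delta^{n}g(0)$, where $\Delta$ is the forward difference, applied here with $n=K-1$ and $g(x) = (a-qx)^{T-1}$. The main obstacle is then controlling the sign of $\Delta^{K-1}g(0)$, and for this I would invoke the standard integral representation $\Delta^{K-1}g(0) = \int_{[0,1]^{K-1}} g^{(K-1)}(t_1+\cdots+t_{K-1})\,dt$. Since $g^{(K-1)}(x) = \frac{(T-1)!}{(T-K)!}(-q)^{K-1}(a-qx)^{T-K}$ and, on the integration range $x\in[0,K-1]$, one has $a-qx = 1-q(1+x) \geq 1-Kq \geq 0$, the integrand has constant sign $(-1)^{K-1}$ (for $q\geq 0$); hence $(-1)^{K-1}\Delta^{K-1}g(0)\geq 0$, so the displayed derivative is $\geq 0$ and $\phi_K(\cdot,T)$ is increasing. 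The only nonroutine point is this sign computation; the algebra around it is bookkeeping.

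As a cross-check and an alternative that avoids the finite-difference machinery, one can argue probabilistically from \eqref{eq:phiident}. For $q \leq q' \leq 1/K$, throw $T$ balls independently, each being "good" with probability $Kq$, "upgradeable" with probability $K(q'-q)$, and "bad" otherwise, and give each ball an independent uniform label in $\{1,\dots,K\}$. Sending the good balls (resp.\ the good-or-upgradeable balls) to the box named by their label realises the parameter-$q$ (resp.\ parameter-$q'$) multinomial of Lemma~\ref{lem:maskprob} on one probability space, and the set of occupied boxes among $\{1,\dots,K\}$ is monotone in the parameter; hence the event "all of $1,\dots,K$ occupied" is more likely under $q'$, i.e.\ $\phi_K(q,T)\leq\phi_K(q',T)$. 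Either route completes the proof.
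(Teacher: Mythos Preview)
Your proof is correct, and both of your routes differ from the paper's. The paper also differentiates and uses $\ell\binom{K}{\ell}=K\binom{K-1}{\ell-1}$, arriving at the same sum you display. But instead of finite differences, it factors out $(1-q)^{T-1}$ via the identity $1-\ell q = (1-q)\bigl(1-(\ell-1)\tfrac{q}{1-q}\bigr)$ and recognises what remains as $\phi_{K-1}\bigl(\tfrac{q}{1-q},\,T-1\bigr)$, which is nonnegative because it is itself a probability by Lemma~\ref{lem:maskprob} (this step, like yours, implicitly needs $q\le 1/K$, so your caveat about the range is apt). That recursive reduction is arguably the slickest version: one extra line of algebra plus the probabilistic interpretation already on hand. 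Your finite-difference/integral-representation route is more self-contained---it does not loop back to Lemma~\ref{lem:maskprob}---but imports the Hermite--Genocchi-type formula for $\Delta^{K-1}$, which some readers may find less standard. Your coupling argument is the most conceptually direct of the three, giving monotonicity without differentiation at all; it would make a clean alternative proof in its own right.
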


\begin{proof}
The key is to observe that direct calculation gives that
\begin{align}
	\df{\partial}{\partial q}
\phi_K(q,T) &= \sum_{\ell =0}^K (-1)^\ell \binom{K}{\ell} \left(
- \ell  T (1- \ell q)^{T-1} \right) \notag \\
	& = T K  \sum_{\ell=1}^K
(-1)^{\ell-1} \binom{K-1}{\ell-1} (1- \ell q)^{T-1}  \notag \\
& = T K (1-q)^{T-1}  \sum_{\ell=1}^K
(-1)^{\ell-1} \binom{K-1}{\ell-1} \left( 1- (\ell-1 ) \frac{q}{1-q}\right)^{T-1}  \notag \\
& =   T K (1-q)^{T-1}  \phi_{K-1} \left( \frac{q}{1-q}, T-1 \right) \notag \\
& \geq 0, \label{eq:useful2}
\end{align}
where we have used the facts that  $$\ell \binom{K}{\ell}=K \binom{K-1}{\ell-1},$$ 
and
\[ (1 - \ell q) = (1-q) \left( 1 - (\ell - 1) \frac{q}{1-q} \right).\]
 The positivity of (\ref{eq:useful2}) follows since $\phi_K(q,T)$ is a probability,
 and hence positive for any choice of $K$, $q$ and $T$ (see (\ref{eq:bonferroni}) above).
\end{proof}

\begin{theorem}[Chernoff-Hoeffding theorem \cite{hoeffding1963}]
\label{thm:chernoff}
Let $X_1, X_2, \ldots$ be independent 
and identically distributed random variables with $\E X_1=p$. Then, for all $0<\veps<1-p$,
\begin{align}
\label{eq:chernoff1}
\pr\left(\df{1}{m}\sum_{i=1}^m X_i > p+\veps\right)  	& \leq e^{-mD(p+\veps\|p)}
\end{align}
\end{theorem}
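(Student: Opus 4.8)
The plan is to run the standard Chernoff (exponential Markov) argument, optimised over the exponential tilt. Write $S_m=\sum_{i=1}^m X_i$ and $a=p+\veps$, and let $D(a\|b)=a\ln\frac ab+(1-a)\ln\frac{1-a}{1-b}$ denote the binary relative entropy (in nats). As stated the theorem is really about $[0,1]$-valued summands; this is all that is needed in the paper, where it is applied to $M_0\sim\bino(T,q_0)$, a sum of Bernoulli($q_0$) variables, so I would proceed under that assumption.

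First I would apply Markov's inequality to the exponentiated sum: for every $t>0$,
\[ \pr\big(S_m>ma\big)\le\pr\big(e^{tS_m}\ge e^{tma}\big)\le e^{-tma}\,\E\big[e^{tS_m}\big]=e^{-tma}\big(\E[e^{tX_1}]\big)^m, \]
where the last equality uses independence and the common distribution of the $X_i$. Next I would bound the moment generating function using convexity of $x\mapsto e^{tx}$ on $[0,1]$: for $t\ge 0$ and $x\in[0,1]$ one has the pointwise bound $e^{tx}\le(1-x)+xe^t$, and taking expectations gives $\E[e^{tX_1}]\le 1-p+pe^t$. Combining the two displays, $\pr(S_m>ma)\le\big(e^{-ta}(1-p+pe^t)\big)^m$ for all $t>0$.

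Finally I would optimise the exponent $g(t)=-ta+\ln(1-p+pe^t)$. Setting $g'(t)=0$ gives $\dfrac{pe^t}{1-p+pe^t}=a$, that is, $e^{t^\star}=\dfrac{a(1-p)}{p(1-a)}$; this is a genuine positive solution precisely because $p<a<1$, which is where the hypothesis $0<\veps<1-p$ enters. A short substitution gives $1-p+pe^{t^\star}=\dfrac{1-p}{1-a}$, hence
\[ g(t^\star)=-a\ln\frac{a(1-p)}{p(1-a)}+\ln\frac{1-p}{1-a}=-\Big(a\ln\frac ap+(1-a)\ln\frac{1-a}{1-p}\Big)=-D(a\|p), \]
and therefore $\pr(S_m/m>p+\veps)\le e^{mg(t^\star)}=e^{-mD(p+\veps\|p)}$, as claimed.

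There is no real obstacle here: this is the textbook Chernoff bound, and the only points needing a little care are (i) obtaining the moment generating function inequality by pointwise convexity, so the statement holds for arbitrary $[0,1]$-valued variables and not just Bernoulli ones, and (ii) checking that the stationary point $t^\star$ is positive, which is exactly guaranteed by $\veps<1-p$. Alternatively one can simply invoke Hoeffding \cite{hoeffding1963} directly.
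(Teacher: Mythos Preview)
Your proof is correct and is the standard exponential-tilt (Chernoff) argument. The paper itself does not prove this theorem at all: it is stated as a cited background result from Hoeffding \cite{hoeffding1963}, so there is no ``paper's own proof'' to compare against. Your write-up supplies exactly the argument one would expect, and you are right to flag that the statement as written really needs the $X_i$ to be $[0,1]$-valued (so that the convexity bound $e^{tX_1}\le(1-X_1)+X_1e^t$ applies), and that this suffices for the paper's single use of the result on $M_0\sim\bino(T,q_0)$. One tiny quibble: it is the hypothesis $\veps>0$ that makes $t^\star>0$, while $\veps<1-p$ is what keeps $a<1$ so that the optimum is finite and the formula $e^{t^\star}=\dfrac{a(1-p)}{p(1-a)}$ is well-defined; both are needed, just for the reasons you have slightly swapped.
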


\section*{Acknowledgments}
We would like to thank the anonymous referees for their careful reading of this paper, and for their suggestions of how to present  our work.


\end{document}